\newtheorem{theo}{Theorem}[section]
\newtheorem{defi}[theo]{Definition}
\newtheorem{coro}[theo]{Corollary}
\newtheorem{lemm}[theo]{Lemma}
\newtheorem{ex}{Example}
\newcommand{\Z}{{\mathbb{Z}}}
\newcommand{\C}{{\cal C}}
\newcommand{\D}{{\cal D}}
\newcommand{\type}{{\left(\alpha,\beta,\theta;k_{0};k_{1},k_{2};k_{3},k_{4},k_{5}\right)}}
\newcommand{\T}{{\cal R}}
\newcommand{\R}{{\cal R_{\alpha,\beta,\theta}}}
\newcommand{\M}{{\Z_{2}\Z_{4}\Z_{8}}}
\title{$\Z_{2}\Z_{4}\Z_{8}$-Cyclic Codes}
\author{ Ismail Aydogdu$^{a}$\thanks{{\footnotesize iaydogdu@yildiz.edu.tr
(Ismail Aydogdu)}}, Fatmanur Gursoy$^{a}$\thanks{{\footnotesize fatmanur@yildiz.edu.tr (Fatmanur Gursoy)}}
 \\Department of Mathematics, Yildiz Technical University \\ Istanbul, Turkey }
\date{\today}
\begin{document}

\maketitle

\begin{abstract}
In this paper we study $\M$-additive codes, which are the extension of recently introduced $\mathbb{Z}_{2}\mathbb{Z}_{4}$-additive codes. We determine the standard forms of the generator and parity-check matrices of $\M$-additive codes. Moreover, we investigate $\M$-cyclic codes giving their generator polynomials and spanning sets. We also give some illustrative examples of both $\M$-additive codes and $\M$-cyclic codes.   

\end{abstract}

\begin{quotation}
\bigskip Keywords: Generator Matrix, Parity-check Matrix, Cyclic Codes, $\M$-additive Codes.
\end{quotation}
\bigskip
2000 Mathematics Subject Classification: 94B05, 94B60.

\section{Introduction}
In the milestone paper written by Sloane and co-workers(\cite{Hammons}) in 1994, it was shown that interesting binary codes could be found as images of linear codes over $\Z_{4}$ under a non-linear Gray map. This paper have attracted researchers to study codes over different rings and nowadays there has been many papers on codes over rings. The reader may see some of them in \cite{Calder,Schmidt,Honold}. There are many reasons for study codes over rings. First of all, the properties of rings are very close to the properties of finite fields.  Hence a theory of linear codes over finite chain rings is expected to resemble  the  theory  of  linear  codes  over  finite fields. Moreover,  the  class  of  finite chain  rings  contains  important  representatives  like  integer  residue  rings  of  prime power order and Galois rings.

Later in 2010, Borges et al. brought a new perspective to codes over rings, introducing $\Z_2\Z_4$-additive codes \cite{Borges}. $\Z_2\Z_4$-additive codes are $\Z_4$ submodules (additive groups) of $\Z_2^{\alpha}\times\Z_4^{\beta}$ where $\alpha$ and $\beta$ are positive integers. If $\alpha=0$ then $\Z_2\Z_4$-additive codes are quaternary linear codes over $\Z_{4}$ and if $\beta=0$ then $\Z_2\Z_4$-additive codes are just binary linear codes. $\Z_2\Z_4$-additive codes also have some applications in fields such as the field of Steganography \cite{Steo}. Recently, Aydogdu and Siap generalized these additive codes to codes over $\Z_2\times\Z_{2^s}$ (\cite{ismail1}) and $\Z_{p^r}\times\Z_{p^s}$ (\cite{ismail2}) for a prime $p$, $r$ and $s$ are positive integers with $1\leq r<s$. Another considerable work about $\Z_2\Z_4$-additive codes have been done by Abualrub et al. in 2014, in which they have introduced $\Z_2\Z_4$-cyclic codes and have given the generator polynomials for these cyclic codes \cite{Abu}. Lately, Borges et al. study the duals of $\Z_2\Z_4$-cyclic codes in \cite{Borges2}.

In this work, we aim to study the structure of $\M$-additive and cyclic codes. The reader may see such an additive codes as an extension or generalization of $\Z_{2}\Z_{4}$-additive codes. We begin with the structure of $\M$-additive codes and give the standard forms of generator and parity-check matrices. We also relate these codes to binary codes by defining a special Gray map. Next, we determine the generator polynomials and minimal spanning sets of $\M$-cyclic codes. Furthermore, we present some examples both the additive and cyclic $\M$ codes.


\section{$\mathbb{Z}_2\mathbb{Z}_4\Z_{8}$-additive codes}

 Let $\mathbb{Z}_{2}$ be the finite binary field, $\mathbb{Z}_{4}$ and $\mathbb{Z}_{8}$ be the ring of integers modulo $4$ and modulo $8$ respectively. We construct the following set

\begin{equation*}
\T=\mathbb{Z}_{2}\Z_{4}\Z_{8}=\left\{ \left( u,v,w\right) |u\in\mathbb{Z}_{2},~v\in\Z_{4}\text{ and }w\in \Z_{8}\right\}.
\end{equation*}
It is obvious that this set is closed under addition. To make it closed under multiplication by elements in $\Z_{8}$ we define the following multiplication for $\left( u,v,w\right)\in \T$ and $d\in \Z_{8}$.

\begin{equation*}
d \cdot \left( u,v,w\right) =\left( du~mod~2,dv~mod~4,dw\right).
\end{equation*}
This discussion shows that the set $\T$ is also closed under multiplication by elements in $\Z_{8}$ and therefore $\T$ is a $\Z_{8}$-module with respect to this scalar multiplication.

\begin{defi}
$\C$ is called a $\mathbb{Z}_{2}\mathbb{Z}_{4}\Z_{8}$-additive code if it is a subgroup of $\mathbb{Z}_{2}^{\alpha}\times \Z_{4}^{\beta}\times \Z_{8}^{\theta}$.
\end{defi}

It is clear from the definition of a $\M$-additive code that the first $\alpha$ coordinates of $\C$ consist of the entries from $\Z_{2}$, the next $\beta$ coordinates are elements from $\Z_{4}$ and remanning $\theta$ coordinates are the elements of the ring $\Z_{8}$. We all know very well from the Finite Abelian Group Theorem that such a additive code $\C$ which is a subgroup of $\mathbb{Z}_{2}^{\alpha}\times \Z_{4}^{\beta}\times \Z_{8}^{\theta}$ is group isomorphic to the abelian structure $$\Z_{2}^{k_{0}}\times\Z_{2}^{2k_{1}}\times\Z_{2}^{k_{2}}\times\Z_{2}^{3k_{3}}\times\Z_{2}^{2k_{4}}\times\Z_{2}^{k_{5}}.$$ 
Considering this isomorphism, we say such a $\M$-additive code $\C$ is of type $\type$. 

In the literature, it has been defined several Gray maps to relate codes over rings to codes over $\Z_{2}$(binary codes) with respect to different metrics. For instance, Carlet has defined a Gray map for codes over $\Z_{2^{s}}$ with respect to the homogenous weight in \cite{Carlet}. Using this result we give the following definition of a generalized Gray map.

\begin{defi}
Let $\phi_{1}$ and $\phi_{2}$ are the following well-known Gray maps.
\begin{equation*}
\begin{split}
\phi_{1}:&\Z_{4}\rightarrow\Z_{2}^{2}  \\
                &0 \rightarrow 00\\
                &1 \rightarrow 01  \\
                &2 \rightarrow 11    \\
                &3 \rightarrow 10\\\\\\\\\\
\end{split}
\qquad\qquad\qquad
\begin{split}
 \phi_{2}: &\Z_{8}\rightarrow \Z_{2}^{4}\\
  &0 \rightarrow 0000\\
&1 \rightarrow 0001 \\
&2 \rightarrow 0011\\
&3 \rightarrow 0111\\
&4 \rightarrow 1111  \\
&5 \rightarrow 1110\\
&6 \rightarrow 1100\\
&7 \rightarrow 1000
\end{split}
\end{equation*}
We can also define a Gray map for codes over $\M$ for all $u=(u_{0},u_{1},...,u_{\alpha-1})\in\mathbb{Z}_{2}^{\alpha}$,  $v=(v_{0},v_{1},...,v_{\beta-1})\in\mathbb{Z}_{4}^{\beta}$ and $w=(w_{0},w_{1},...,w_{\theta-1})\in \Z_{8}^{\theta}$ as follows.

\begin{eqnarray*}
\Phi&:& \Z_{2}^{\alpha}\times\Z_{4}^{\beta} \times \Z_{8}^{\theta}\rightarrow\Z_{2}^n \\
\Phi (u,v,w)&=&\left( u_{0},u_{1},...,u_{\alpha-1},\phi_{1}(v_{0}),\phi_{1}(v_{1}),...,\phi_{1}(v_{\beta-1}),\phi_{2} (w_{0}),\phi_{2}(w_{1}),...,\phi_{2} (w_{\theta-1})\right).
\end{eqnarray*}
Hence, the Gray image $\Phi\left(\C\right)=C$ of a $\Z_{2}\Z_{4}\Z_{8}$-additive code $\C$ is a binary code of length $n=\alpha+2\beta+4\theta$ and called $\M$-linear code.
\end{defi}

\subsection{Generator matrices of $\Z_{2}\Z_{4}\Z_{8}$-additive codes}

The generator matrix $G$ of a linear code is the matrix with rows that are formed
by a set of basis elements (minimal spanning set) of the linear subspace $C$ in case of the finite
fields or the subgroup (or submodule) in a more general setting. The code $C$ is constituted by the all linear combinations of
the rows of $G$. Here, we determine the standard form of the generator matrices of $\Z_{2}\Z_{4}\Z_{8}$-additive codes. The standard form of the matrix $G$ is a special form of the matrix which is obtained by elementary row operations. Using the standard form of the generator matrix we can easily determine the type of a code and then calculate its size directly.

\begin{theo}
Let $\C$ be a $\M$-additive code of type $\type$. Then $\C$ is permutation equivalent to a $\M$-additive code which has the following standard form generator matrix
\begin{eqnarray}\label{G}
G_{S} =
 \left(\begin{array}{cc|ccc|cccc}
  I_{k_0} & \bar{A}_{01} & 0 & 0 & 2T_{1} & 0 & 0& 0& 4T_{2}  \\
  0 & \bar{S_{1}} & I_{k_1} & B_{01} & B_{02}& 0 & 0& 2T_{3}& 2T_{4} \\
  0 & 0 & 0 & 2I_{k_2} & 2B_{12} & 0& 0& 0& 4T_{5}  \\
   0 & \bar{S_{2}}& 0 & S_{01} & S_{02} & I_{k_{3}}& A_{01}&A_{02}&A_{03}\\
0 & \bar{S_{3}}& 0 & 0 & 2S_{12} & 0& 2I_{k_{4}}&2A_{12}&2A_{13}\\
0 & 0& 0 & 0 & 0 & 0& 0 &4I_{k_{5}}&4A_{23}
 \end{array}\right)
\end{eqnarray}
where $\bar{A}_{01}, \bar{S_{1}}, \bar{S_{2}}, \bar{S_{3}}$ are matrices with all entries from $\Z_{2}$ and $B_{01}, B_{02} B_{12}, S_{01}, S_{02}, S_{12}, T_{1}$ are matrices over $\Z_{4}$. Also, $A_{ij}$ and $T_{k}$ are matrices with all entries from $\Z_{8}$ for $0\leq i< j\leq 3$ and  $2\leq k \leq5$. Furthermore, $\C$ has $2^{k_{0}}2^{2k_{1}}2^{k_{2}}2^{3k_{3}}2^{2k_{4}}2^{k_{5}}$ codewords.
\end{theo}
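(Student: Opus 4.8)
The plan is to mimic the classical derivation of the standard form of a generator matrix for codes over chain rings, adapted to the mixed-alphabet setting $\mathbb{Z}_2^\alpha \times \mathbb{Z}_4^\beta \times \mathbb{Z}_8^\theta$. I would start from an arbitrary generating set of $\C$, written as the rows of a matrix partitioned into three blocks of widths $\alpha$, $\beta$, $\theta$, and then perform elementary row operations (which do not change the code) together with column permutations within each of the three blocks (which only change $\C$ to a permutation-equivalent code, as permitted by the statement). The key point is that the three blocks must be reduced in a coordinated way because a row that is ``killed'' modulo a smaller modulus in one block may still act nontrivially in another block; this is why the matrix has the nested, staircase shape with scalar multiples $2$ and $4$ appearing.

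First I would treat the $\Z_8$-block. Using row operations, bring the $\Z_8$-part to a staircase form with pivot columns carrying, successively, units (giving the $I_{k_3}$ rows), then $2\cdot(\text{unit})$ entries (the $2I_{k_4}$ rows), then $4\cdot(\text{unit})$ entries (the $4I_{k_5}$ rows), and finally rows that are zero in the $\Z_8$-block. Column permutations inside the $\theta$ coordinates put these pivots in the displayed positions, producing the blocks $A_{01},A_{02},A_{03},A_{12},A_{13},A_{23}$ and the $4I_{k_5}$, $2I_{k_4}$, $I_{k_3}$ identity blocks; the $2$'s and $4$'s in front of $I_{k_4}$ and $I_{k_5}$ and in $2A_{12}$, $2A_{13}$, $4A_{23}$ come from the orders of those pivot rows. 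The rows that are now zero on the $\Z_8$-block form a subcode that must still be reduced; these contribute the $2T_2$, $2T_3$, $2T_4$, $4T_5$ entries in the last $\Z_8$-columns, because after clearing the pivots such rows can only survive in the $\Z_8$-block as multiples of $2$ or $4$.

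Next, restricting to the rows that vanish on the $\Z_8$-block, I repeat the same staircase reduction on the $\Z_4$-block: unit pivots give the $I_{k_1}$ rows and the blocks $B_{01},B_{02}$; $2\cdot(\text{unit})$ pivots give the $2I_{k_2}$ row and $2B_{12}$; the rows that are additionally zero on the $\Z_4$-block then get reduced on the $\Z_2$-block, yielding the $I_{k_0}$ block and $\bar A_{01}$. The off-pivot interaction entries $\bar S_1, \bar S_2, \bar S_3, S_{01}, S_{02}, S_{12}, T_1$ arise from the fact that a row whose $\Z_8$- or $\Z_4$-leading part has been normalized may still have nonzero, non-reducible entries in the earlier blocks; reducing those entries modulo the appropriate moduli and using the pivot rows of the later blocks to clean up columns forces exactly the pattern of $\Z_2$-entries ($\bar S_i$) and $\Z_4$-entries ($S_{ij}$, $T_1$) displayed. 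Throughout, one must check that every row operation used is legitimate over the relevant ring — e.g. one cannot divide by $2$ in $\Z_4$ or $\Z_8$ — which is the reason the reduction proceeds from the ``largest'' ring $\Z_8$ inward and why scalar multiples, rather than full row-reductions, appear.

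Finally, the count of codewords: once $G_S$ is in this form, its rows are a minimal generating set and the orders of the rows are visible from the leading identity blocks — the $k_0$ rows of $I_{k_0}$ have order $2$, the $k_1$ rows of $I_{k_1}$ have order $4$, the $k_2$ rows of $2I_{k_2}$ have order $2$, the $k_3$ rows of $I_{k_3}$ have order $8$, the $k_4$ rows of $2I_{k_4}$ have order $4$, and the $k_5$ rows of $4I_{k_5}$ have order $2$. Since the leading $I$-blocks occupy disjoint coordinate positions, distinct linear combinations of rows give distinct codewords, so $|\C| = 2^{k_0}\cdot 4^{k_1}\cdot 2^{k_2}\cdot 8^{k_3}\cdot 4^{k_4}\cdot 2^{k_5} = 2^{k_0}2^{2k_1}2^{k_2}2^{3k_3}2^{2k_4}2^{k_5}$, matching the stated abelian structure. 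The main obstacle is bookkeeping: verifying that after each stage the already-normalized blocks are not disturbed, that the only admissible residues in the interaction blocks are the ones displayed (in particular that entries above $2I_{k_2}$, $2I_{k_4}$, $4I_{k_5}$ can be taken to be multiples of $2$ or $4$), and that column permutations needed in one block do not spoil the shape achieved in another — all of which is routine but must be organized carefully.
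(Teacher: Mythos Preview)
Your proposal is correct and follows essentially the same approach as the paper: elementary row operations combined with column permutations within each of the three coordinate blocks, followed by the codeword count read off from the row orders. The paper's own proof is extremely terse --- it simply writes an intermediate matrix in which each of the three diagonal blocks is already in its respective $\Z_2$-, $\Z_4$-, $\Z_8$-standard form and then asserts that ``applying the necessary row operations'' yields $G_S$ --- so your write-up in fact supplies considerably more detail (the order of reduction, why the $2$'s and $4$'s appear in the interaction blocks, and the injectivity argument for the size count) than the paper itself provides.
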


\begin{proof}
Let  $\C$ be a $\M$-additive code of length $\alpha+\beta+\theta$ where $\alpha$ is the length of $\Z_{2}$ part and $\beta$ and $\theta$ are lengths of $\Z_{4}$ and $\Z_{8}$ parts respectively. Therefore we can write the generator matrix of $\C$ in the following form.

\begin{eqnarray*}
 \left(\begin{array}{cc|ccc|cccc}
 \cellcolor{purple!50}I_{k_0} &\cellcolor{purple!50} \bar{A'_{01}} & T'_{01} & T'_{02} & T'_{03} & T'_{04} & T'_{05}& T'_{06}& T'_{07}  \\[6pt] \hline
  S'_{01} & S'_{02} & \cellcolor{cyan!50}I_{k_1} & \cellcolor{cyan!50}B'_{01} & \cellcolor{cyan!50}B'_{02}& T'_{14} & T'_{15}& T'_{16}& T'_{17} \\
  S'_{11} & S'_{12} & \cellcolor{cyan!50}0 & \cellcolor{cyan!50}2I_{k_2} & \cellcolor{cyan!50}2B'_{12} & T'_{24} & T'_{25}& T'_{26}& T'_{27} \\[6pt] \hline
S'_{21} & S'_{22}& S'_{23} & S'_{24} & S'_{25} & \cellcolor{green!50}I_{k_{3}}& \cellcolor{green!50}A'_{01}&\cellcolor{green!50}A'_{02}&\cellcolor{green!50}A'_{03}\\
S'_{31} & S'_{32}& S'_{33} & S'_{34} & S'_{35} &\cellcolor{green!50} 0& \cellcolor{green!50}2I_{k_{4}}&\cellcolor{green!50}2A'_{12}&\cellcolor{green!50}2A'_{13}\\
S'_{41} & S'_{42}& S'_{43} & S'_{44} & S'_{45} &\cellcolor{green!50} 0& \cellcolor{green!50}0 &\cellcolor{green!50}4I_{k_{5}}&\cellcolor{green!50}4A'_{23}
 \end{array}\right)
\end{eqnarray*}

By applying the necessary row operations to the above matrix, we get the standard form generator matrix in (\ref{G}).
\end{proof}

\begin{ex}
Let $\C$ be a $\M$-additive code generated by the matrix
$$G=\left(
\begin{array}{cc|ccc|cccc}
1 & 1 & 2 & 0 & 2 & 4 & 0 & 4 & 4 \\
0 & 1 & 3 & 2 & 1 & 6 & 2 & 6 & 4 \\
1 & 1 & 2 & 2 & 0 & 4 & 4 & 0 & 4 \\
1 & 0 & 1 & 3 & 2 & 5 & 7 & 2 & 6 \\
1 & 1 & 2 & 2 & 0 & 2 & 4 & 6 & 2\\
1 & 0 & 0 & 0 & 2 & 4 & 4 & 0 & 4
\end{array}\right).$$
This matrix can be written in the standard form by using elementary row operations as follows.
\begin{eqnarray}\label{G_s}
G_S=\left(
\begin{array}{cc|ccc|cccc}
\cellcolor{purple!50}1 & \cellcolor{purple!50}0 & 0 & 0 & 2 & 0 & 0 & 0 & 4 \\
\cellcolor{purple!50}0 & \cellcolor{purple!50}1 & 0 & 0 & 2 & 0 & 0 & 4 & 4 \\ [6pt] \hline
0 & 0 & \cellcolor{cyan!50}1 & 0 & 3 & 0 & 0 & 6 & 4 \\
0 & 0 & 0 & \cellcolor{cyan!50}2 & 0 & 0 & 0 & 4 & 0 \\ [6pt] \hline
0 & 0 & 0 & 1 & 1 & \cellcolor{green!50}1 & 1 & 2 & 0\\
0 & 0 & 0 & 0 & 0 & 0 & \cellcolor{green!50}2 & 6 & 6
\end{array}\right)
\end{eqnarray}
So, the matrix $G_S$ say that $\C$ is of type $\left(2,3,4;2;1,1;1,1,0 \right)$ and  $\C$ has $2^{2}\cdot 4^{1}\cdot2^{1}\cdot8^{1}\cdot2^{2}\cdot4^{1}=4096$ codewords.
\end{ex}

\subsection{Parity-check matrices of $\M$-additive codes}

The set of all vectors which are orthogonal to every vector in $C$ is a subspace, and hence a linear code called the dual code of
$C$, and denoted by $C^{\perp}$. A generator matrix for $C^{\perp}$ is called a parity-check matrix of $C$. In this subsection, we determine the standard form of the generator matrix of the dual code $\C^\perp$ of a $\M$-additive code $\C$. We begin with defining a new inner product for the elements ${\bf u},{\bf v}\in\Z_{2}^{\alpha}\times\Z_{4}^{\beta}\times\Z_{8}^{\theta}$ as
\[
{\bf u}\cdot {\bf v} =4\left( \sum_{i=1}^{\alpha}u_{i}v_{i}\right)+2\left(\sum_{j=\alpha+1}^{\alpha+\beta}u_{j}v_{j}\right)+\sum_{k=\alpha+\beta+1}^{\alpha+\beta+\theta}u_{k}v_{k}.
\]
Further the dual code $\C^\perp$ can be defined in the usual way with respect to this inner product. 
\[
\C^{\perp }=\left\{{\bf v}\in\Z_{2}^{\alpha}\times\Z_{4}^{\beta}\times\Z_{8}^{\theta}\mid {\bf u}\cdot {\bf v} =0\text{ }for~all~{\bf u}\in\C\right\}.
\]
It is very easy to show that $\C^\perp$ is also a $\M$-additive code.

\begin{theo}
If $\C$ is a $\M$-additive code with the generator matrix in (\ref{G}),then
\begin{eqnarray}\label{H}
H =
 \left(\begin{array}{cc|ccc|cccc}
 -\bar{A_{01}}^{t} & I_{\alpha-k_{0}} & -2S_{1}^t & 0 & 0 &           \\
-T_{1}^{t} & 0 & -B_{02}^{t}-B_{12}^{t}B_{01}^{t} & B_{12}^t & I_{\beta-k_{1}-k_{2}}&      \\
0 & 0 & -2B_{01}^{t} & 2I_{k_2} & 0 & P     \\
-T_{2}^{t}& 0& -T_{4}^{t}+T_{3}^{t}A_{23}^{t}+T_{5}^{t}B_{01}^{t} & -T_{5}^{t} & 0 &    \\
0 & 0& -2T_{3}^{t} & 0 & 0 &   \\
0 & 0& 0 & 0 & 0 &
 \end{array}\right)
\end{eqnarray}
is the generator matrix of the dual code $\C^\perp$ (the parity-check matrix of  $\C$) where

$$P=\left(\begin{array}{cccc}
4\bar{S_{2}}^{t}-2\bar{S_{3}}^{t}A_{01}^{t}& -2\bar{S_{3}}^{t}& 0& 0 \\
-2S_{01}^{t}B_{12}^{t}-2S_{02}^{t}+2S_{12}^{t}A_{01}^{t}& -2S_{12}^{t}& 0& 0\\
-4S_{01}^{t}& 0& 0& 0\\
-A_{03}^{t}+A_{13}^{t}A_{01}^{t}+A_{23}^{t}A_{02}^{t}-A_{23}^{t}A_{12}^{t}A_{01}^{t}+2S_{01}^{t}T_{5}^{t}& -A_{13}^{t}+A_{23}^{t}A_{12}^{t}&-A_{23}^{t}&I_{\theta-k_{3}-k_{4}-k_{5}} \\
-2A_{02}^{t}+2A_{12}^{t}A_{01}^{t}& -2A_{12}^{t}&2I_{k_{5}}&0 \\
-4A_{01}^t& 4I_{k_{4}} &0&0
\end{array}\right).$$
\end{theo}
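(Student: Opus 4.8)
The plan is to prove directly that the rows of $H$ generate $\C^\perp$, using that by hypothesis $\C=\langle G_S\rangle$ with $G_S$ as in (\ref{G}). This has three ingredients: (a) checking that the displayed entries of $H$ lie in the rings forced by their column positions (after the obvious reductions modulo $2$, $4$, $8$), so that the rows of $H$ generate a $\M$-additive code $\D$; (b) showing $\D\subseteq\C^\perp$, i.e.\ that each row of $H$ is orthogonal, under the inner product $\,\cdot\,$, to each row of $G_S$; and (c) showing $|\D|=|\C^\perp|$, which together with (b) forces $\D=\C^\perp$ and proves the theorem.

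For (c) I would compute each side separately. Composing the $\Z_8$-valued form $\,\cdot\,$ with the faithful character $\Z_8\to\mathbb{C}^\times$, $x\mapsto e^{\pi i x/4}$, produces a pairing of $\Z_2^{\alpha}\times\Z_4^{\beta}\times\Z_8^{\theta}$ with itself that restricts on each $\Z_2$-, $\Z_4$- and $\Z_8$-coordinate to the standard perfect pairing $(u,v)\mapsto(-1)^{uv}$, $i^{uv}$, $e^{\pi i uv/4}$ respectively; this is precisely the effect of the weights $4$, $2$, $1$ in the definition of $\,\cdot\,$. Hence the pairing is non-degenerate, $\C^\perp$ is the annihilator of $\C$, and $|\C|\,|\C^\perp|=2^{\alpha}4^{\beta}8^{\theta}$ — the exact analogue of the known identity for $\Z_2\Z_4$-additive codes. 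With $|\C|=2^{k_0}2^{2k_1}2^{k_2}2^{3k_3}2^{2k_4}2^{k_5}$ from the previous theorem, this gives $|\C^\perp|=2^{\alpha}4^{\beta}8^{\theta}/|\C|$. On the other hand, a column permutation places the pivot blocks $I_{\alpha-k_0}$, $I_{\beta-k_1-k_2}$, $I_{\theta-k_3-k_4-k_5}$, $2I_{k_2}$, $2I_{k_5}$, $4I_{k_4}$ of $H$ in distinct column blocks and in block upper-triangular position; reading $|\D|$ off from this shape — as $|\C|$ is read off from $G_S$ in the previous theorem, after checking that $H$ introduces no further relations among its rows — gives $2^{\alpha-k_0}\cdot4^{\beta-k_1-k_2}\cdot8^{\theta-k_3-k_4-k_5}\cdot2^{k_2}\cdot4^{k_5}\cdot2^{k_4}$, which is precisely $2^{\alpha}4^{\beta}8^{\theta}/|\C|$. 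Thus (c) follows once (b) is established.

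Ingredient (b) is the computational core and where I expect the real work. Since $\C=\langle G_S\rangle$, it suffices to verify orthogonality between each of the six block rows of $G_S$ and each of the six block rows of $H$, i.e.\ $36$ matrix identities. For a row ${\bf g}$ of $G_S$ and a row ${\bf h}$ of $H$ the condition ${\bf g}\cdot{\bf h}=0$ reads $4\Sigma_2+2\Sigma_4+\Sigma_8\equiv 0\pmod 8$, where $\Sigma_2,\Sigma_4,\Sigma_8$ are the partial inner products over the $\Z_2$-, $\Z_4$- and $\Z_8$-blocks; thus only $\Sigma_2$ modulo $2$ and $\Sigma_4$ modulo $4$ matter, and the many explicit factors $2$ and $4$ carried by the entries of $G_S$ and of $H$ annihilate or collapse most of the contributions modulo $8$. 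Most of the $36$ verifications are immediate from the zero blocks; the nontrivial ones are exactly the relations encoded by the off-diagonal entries of $H$ — for instance, requiring block row $4$ of $H$ to be orthogonal to all six block rows of $G_S$ pins down the four columns of the fourth row of $P$, and the shapes $-B_{02}^t-B_{12}^tB_{01}^t$ and $-T_4^t+T_3^tA_{23}^t+T_5^tB_{01}^t$ arise in the same way. The delicate point is keeping the transpose orders straight and tracking which product is reduced modulo which modulus; once this bookkeeping is carried through, $\D\subseteq\C^\perp$ follows, and with (c) the proof is complete.
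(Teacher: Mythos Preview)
Your proposal is correct and follows essentially the same route as the paper's own proof: verify $G_S\cdot H^t=0$ and then match cardinalities. The paper simply asserts the identity $|\C|\,|\C^{\perp}|=2^{\alpha}4^{\beta}8^{\theta}$ and reads $|\langle H\rangle|$ off from the pivot blocks, whereas you supply an actual argument for the non-degeneracy of the pairing via the character $\Z_8\to\mathbb{C}^{\times}$; this is a welcome piece of rigor but not a different strategy.
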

\begin{proof}
 We can easily check that $G_{S}\cdot H^{t}=0$. Therefore, every row of $H$ is orthogonal to the rows of $G_{S}$. Further, $|\C|=2^{k_{0}}\cdot4^{k_{1}}\cdot2^{k_{2}}\cdot8^{k_{3}}\cdot4^{k_{4}}\cdot2^{k_{5}}$ and $|\C^{\perp}|=2^{\alpha-k_{0}}\cdot4^{\beta-k_{1}-k_{2}}\cdot2^{k_{2}}\cdot8^{\theta-k_{3}-k_{4}-k_{5}}\cdot4^{k_{5}}\cdot2^{k_{4}}$. Hence, $|\C||\C^{\perp}|=\alpha+2\beta+3\theta$ and as a result, $H$ generates all of the code $\C^\perp$.
\end{proof}

\begin{coro}\label{TypeDual}
Let $\C$ be a $\M$-additive code of type $\type$ with standard form of the generator matrix (\ref{G}). Then, the dual  code $\C^\perp$ is of type $\left(\alpha,\beta,\theta;\alpha-k_{0};\beta-k_{1}-k_{2},k_{2};\theta-k_{3}-k_{4}-k_{5},k_{5},k_{4}\right)$.
\end{coro}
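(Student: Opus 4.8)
The plan is to extract the type of $\C^{\perp}$ directly from the parity-check matrix $H$ in~(\ref{H}), which the theorem that provides~(\ref{H}) identifies as a generator matrix of $\C^{\perp}$. First I would observe that, after a suitable permutation of the coordinates inside the $\Z_{2}$-, $\Z_{4}$- and $\Z_{8}$-parts (together with the induced permutation of the rows), $H$ takes exactly the shape~(\ref{G}) prescribed for a standard form generator matrix. Concretely, the six row blocks of $H$ are headed by the identity-type blocks $I_{\alpha-k_{0}}$ (living in the $\Z_{2}$-coordinates), $I_{\beta-k_{1}-k_{2}}$ and $2I_{k_{2}}$ (in the $\Z_{4}$-coordinates), and $I_{\theta-k_{3}-k_{4}-k_{5}}$, $2I_{k_{5}}$, $4I_{k_{4}}$ (in the $\Z_{8}$-coordinates, where the last three appear inside the submatrix $P$). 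Moving these blocks into the slots they occupy in~(\ref{G}) is just a relabelling of coordinates, which preserves the type.

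Once $H$ is in that form, the type parameters of $\C^{\perp}$ are read off as the dimensions of these blocks, using the rule implicit in~(\ref{G}): a leading $I$ in the $\Z_{2}$-part produces an order-$2$ generator, leading $I$ and $2I$ in the $\Z_{4}$-part produce order-$4$ and order-$2$ generators, and leading $I$, $2I$, $4I$ in the $\Z_{8}$-part produce order-$8$, order-$4$ and order-$2$ generators. Hence $\C^{\perp}$ has $\alpha-k_{0}$ order-$2$ generators in the $\Z_{2}$-part; $\beta-k_{1}-k_{2}$ order-$4$ and $k_{2}$ order-$2$ generators in the $\Z_{4}$-part; and $\theta-k_{3}-k_{4}-k_{5}$ order-$8$, $k_{5}$ order-$4$, $k_{4}$ order-$2$ generators in the $\Z_{8}$-part, which is precisely the asserted type $\left(\alpha,\beta,\theta;\alpha-k_{0};\beta-k_{1}-k_{2},k_{2};\theta-k_{3}-k_{4}-k_{5},k_{5},k_{4}\right)$.

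To make this rigorous one still has to know that no row of $H$ is redundant, so that these block dimensions are genuinely the type parameters and not merely upper bounds. This is supplied by the size count already carried out in the proof of the theorem that provides~(\ref{H}): one has $|\C^{\perp}|=2^{\alpha-k_{0}}4^{\beta-k_{1}-k_{2}}2^{k_{2}}8^{\theta-k_{3}-k_{4}-k_{5}}4^{k_{5}}2^{k_{4}}$, which is exactly the order $2^{k_{0}'}4^{k_{1}'}2^{k_{2}'}8^{k_{3}'}4^{k_{4}'}2^{k_{5}'}$ of a $\M$-additive code of the claimed type, forcing the generators identified above to be independent and of the stated orders. I expect the only real difficulty to be bookkeeping: verifying that the coordinate permutation bringing $H$ into the form~(\ref{G}) leaves the identity-block structure intact and that the leading coefficients $1$, $2$, $4$ sit in the coordinate groups claimed; the order and rank information coming from~(\ref{G}) and~(\ref{H}) then does the rest.
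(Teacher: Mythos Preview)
The paper states this corollary without proof, treating it as an immediate consequence of the preceding theorem: once $H$ in~(\ref{H}) is known to generate $\C^{\perp}$ and to have the displayed block structure with identity blocks $I_{\alpha-k_0}$, $I_{\beta-k_1-k_2}$, $2I_{k_2}$, $I_{\theta-k_3-k_4-k_5}$, $2I_{k_5}$, $4I_{k_4}$, together with the cardinality $|\C^{\perp}|=2^{\alpha-k_{0}}4^{\beta-k_{1}-k_{2}}2^{k_{2}}8^{\theta-k_{3}-k_{4}-k_{5}}4^{k_{5}}2^{k_{4}}$ computed there, the type is read off directly. Your proposal spells out exactly this reasoning (permute coordinates to match the template~(\ref{G}), then invoke the size count to rule out redundancy), so it is correct and in line with what the paper intends.
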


\begin{ex}
Let $\C$ be the $\M$-additive code with the generator matrix in (\ref{G_s}). Then using the above theorem, we can write the parity-check matrix of $\C$ as
$$H_S=\left(\begin{array}{cc|ccc|cccc}
1 & 1 & 1 & 0 & 1 & 6 & 0 & 0 & 0 \\
0 & 0 & 0 & 2 & 0 & 4 & 0 & 0 & 0 \\ [6pt] \hline
0 & 1 & 1 & 3 & 0 & 3 & 5 & 1 & 0 \\
1 & 1 & 2 & 0 & 0 & 3 & 5 & 0 & 1 \\
0 & 0 & 0 & 0 & 0 & 4 & 4 & 0 & 0\\
\end{array}\right).
$$
Therefore $\C^\perp$ is of type $\left(2,3,4;0;1,1;2,0,1\right)$ and has $2^{10}=1024$.
\end{ex}

\section{$\M$-cyclic codes}\label{sec:def}

Cyclic codes are significant family of linear codes because of their natural encoding and decoding algorithm. Moreover, since they can be described as ideals in some polynomial rings, they have a rich algebraic structure. In this section we introduce $\M$-cyclic codes and study their properties. We give their generator polynomials and also minimal generating sets.

\begin{defi}
Let $\C$ be a $\M$-additive code of length $n=\alpha+\beta+\theta$. $\C$ is called $\M$-cyclic if for any codeword \newline $c=\left(u_{0},u_{1},\dots,u_{\alpha-1},v_{0},v_{1},\dots,v_{\beta-1},w_{0},w_{1},\dots,w_{\theta-1}\right)\in \C,$ its cyclic shift
\[
T(c)=\left(u_{\alpha-1},u_{0},\dots,u_{\alpha-2},v_{\beta-1},v_{0},\dots,v_{\beta-2},w_{\theta-1},w_{0},\dots,w_{\theta-2}\right)
\]
is also in $\C$.
\end{defi}

\begin{lemm}
If $\C$ is a $\M$-cyclic code then the dual code $\C^{\perp}$ is also a cyclic $\M$ code.

\end{lemm}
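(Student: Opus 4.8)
The plan is to reduce the statement to a purely combinatorial fact about how the cyclic shift operator $T$ interacts with the inner product on $\Z_2^\alpha\times\Z_4^\beta\times\Z_8^\theta$. Write the shift on the whole space as $T = (T_\alpha, T_\beta, T_\theta)$, where $T_m$ denotes the cyclic shift on a block of length $m$. Since $\C$ is $\M$-cyclic, $T(\C)\subseteq\C$; because $T$ is a bijection of the finite set $\C$ onto itself, in fact $T(\C)=\C$, and consequently $T^{-1}(\C)=\C$ as well. So it suffices to show that $T^{-1}$ (equivalently, repeated application of $T$) also preserves orthogonality, from which $T^{-1}(\C^\perp)\subseteq\C^\perp$ follows, and then $\C^\perp$ is $\M$-cyclic by the same bijectivity argument.

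First I would establish the key identity: for all $\vu,\vv\in\Z_2^\alpha\times\Z_4^\beta\times\Z_8^\theta$,
\[
\vu\cdot\vv \;=\; T(\vu)\cdot T(\vv).
\]
This holds because the inner product defined before Theorem~1.7 is a block-diagonal sum of three ``weighted'' dot products, one on each of the $\Z_2$-, $\Z_4$-, and $\Z_8$-parts, and each such dot product is invariant under a simultaneous cyclic shift of both arguments within its block (the shift merely permutes the index set $\{1,\dots,m\}$ over which the sum is taken, and the weights $4$, $2$, $1$ are constant on each block). This is the routine computation I would not grind through, but it is the technical heart.

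Next, let $\vv\in\C^\perp$ be arbitrary; I must show $T(\vv)\in\C^\perp$, i.e. $\vu\cdot T(\vv)=0$ for every $\vu\in\C$. Given such a $\vu$, the length of the ambient blocks gives $T^{m-1}=T^{-1}$ for a block of length $m$; taking a common period $N=\lcm(\alpha,\beta,\theta)$ yields $T^{N}=\mathrm{id}$ on the whole space, so $T^{N-1}(\vu)$ makes sense and lies in $\C$ because $T(\C)=\C$. Applying the identity $N-1$ times, $\vu\cdot T(\vv) = T^{N-1}(\vu)\cdot T^{N}(\vv) = T^{N-1}(\vu)\cdot\vv = 0$, since $T^{N-1}(\vu)\in\C$ and $\vv\in\C^\perp$. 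As $\vu\in\C$ was arbitrary, $T(\vv)\in\C^\perp$, so $T(\C^\perp)\subseteq\C^\perp$, which is precisely the definition of $\C^\perp$ being $\M$-cyclic.

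The only real obstacle is the bookkeeping in verifying $\vu\cdot\vv = T(\vu)\cdot T(\vv)$ together with making sure the three blocks have their own independent periods — one must shift each of the $\alpha$-, $\beta$-, and $\theta$-blocks cyclically \emph{and simultaneously}, which is exactly what $T$ does, so the identity is block-wise and causes no trouble. Everything else is the standard finite-set bijection argument. I would present the proof in the order above: invariance of the inner product, the finiteness/bijectivity remark, then the short computation closing orthogonality under $T$.
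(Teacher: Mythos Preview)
Your proposal is correct and follows essentially the same route as the paper: both arguments pick $N=\lcm(\alpha,\beta,\theta)$, use that $T^{N}=\mathrm{id}$ so that $T^{N-1}$ sends $\C$ into $\C$, and then verify that $\mathbf{u}\cdot T(\mathbf{v})$ coincides with $T^{N-1}(\mathbf{u})\cdot\mathbf{v}$ by the shift-invariance of the blockwise inner product. The only difference is presentational---you isolate the invariance $\mathbf{u}\cdot\mathbf{v}=T(\mathbf{u})\cdot T(\mathbf{v})$ as a preliminary identity, whereas the paper carries out the same index-reindexing inline.
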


\begin{proof}
Let $\C$ be a $\M$-cyclic code and $u=\left(a_{0},\dots,a_{\alpha-1},b_{0},\dots,b_{\beta-1},d_{0},\dots,d_{\theta-1}\right)\in \C^{\perp}$. We will show that $T(u)\in \C^{\perp}$. Since $u\in \C^{\perp}$, for $v=\left(e_{0},\dots,e_{\alpha-1},g_{0},\dots,g_{\beta-1},h_{0},\dots,h_{\theta-1}\right)$ we have
\begin{eqnarray*}
u\cdot v=4\left(a_{0}e_{0}+\cdots+a_{\alpha-1}e_{\alpha-1}\right)+2\left(b_{0}g_{0}+\cdots+b_{\beta-1}g_{\beta-1}\right)+d_{0}h_{0}+\cdots+d_{\theta-1}h_{\theta-1}=0~mod ~8.
\end{eqnarray*}
Now, let $m=lcm(\alpha,\beta,\theta)$, then $T^{m}(v)=v, v\in \Z_{2}^{\alpha}\times\Z_{4}^{\beta}\times\Z_{8}^{\theta}$. Let
$T^{m-1}(v)=\left(e_{1},\dots,e_{\alpha-1},e_{0},g_{1},\dots g_{0},h_{1},\dots h_{0}\right)=w$. Since $\C$ is cyclic $w\in \C$. Therefore,
\begin{eqnarray*}
0=w\cdot u&=&4\left(e_{1}a_{0}+\cdots+e_{0}a_{\alpha-1}\right)+2\left(g_{1}b_{0}+\cdots+g_{0}b_{\beta-1}\right)+h_{1}d_{0}+\cdots+h_{0}d_{\theta-1}\\
          &=&4\left(e_{0}a_{\alpha-1}+\cdots+e_{1}a_{0}\right)+2\left(g_{0}b_{\beta-1}+\cdots+g_{1}b_{0}\right)+d_{\theta-1}+\cdots+h_{1}d_{0}h_{0}\\
          &=&v\cdot T(u).
\end{eqnarray*}

Hence, $T(u)\in \C^{\perp}$ and $\C^{\perp}$ is also cyclic.
\end{proof}

\begin{defi}

We denote the module $\mathbb{Z}_2[x]/\langle x^{\alpha}-1\rangle\times\Z_{4} [x]/\langle x^{\beta}-1\rangle\times \Z_{8}[x]/\langle x^{\theta}-1\rangle$ by $\R$. Furthermore, any element
$$c=\left(u_{0},u_{1},\dots,u_{\alpha-1},v_{0},v_{1},\dots,v_{\beta-1},w_{0},w_{1},\dots,w_{\theta-1}\right)\in \C$$
can be identified with a module element consisting of three polynomials
\begin{eqnarray*}
c(x)&=&\left(u_{0}+u_{1}x+\cdots+u_{\alpha-1}x^{\alpha-1},v_{0}+v_{1}x+\cdots+v_{\beta-1}x^{\beta-1},w_{0}+w_{1}x+\cdots+w_{\theta-1}x^{\theta-1}\right)\\
&=&\left(u(x),v(x),w(x)\right)
\end{eqnarray*}
in $\R$. This is a one-one correspondence between the elements of $\Z_{2}^{\alpha}\times\Z_{4}^{\beta}\times\Z_{8}^{\theta}$ and $\R$. Moreover we can define a scalar multiplication $\ast$, for $d(x)\in \Z_{8}[x]$ and $\left(f(x),g(x),h(x)\in \R\right)$  as follows.
\[
d(x)\ast\left(f(x),g(x),h(x)\right)=\left(d(x)f(x)~mod~2,d(x)g(x)~mod~4,d(x)h(x)\right)
\]
\end{defi}

\begin{lemm}
The scalar multiplication $\ast$ defined above is well-defined and $\R$ is a $\Z_{8}[x]$-module with respect to this multiplication.
\end{lemm}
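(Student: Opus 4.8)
The plan is to verify that the operation $\ast$ is both well-defined (i.e.\ independent of the choice of polynomial representatives for each coordinate) and that it endows $\R$ with the structure of a $\Z_{8}[x]$-module. First I would address well-definedness. Fix $d(x)\in\Z_{8}[x]$ and a triple $(f(x),g(x),h(x))$. The third component $d(x)h(x)$ is computed in $\Z_{8}[x]/\langle x^{\theta}-1\rangle$, so if $h(x)\equiv h'(x)\pmod{x^{\theta}-1}$ then $d(x)h(x)\equiv d(x)h'(x)\pmod{x^{\theta}-1}$, which is just the fact that $\Z_{8}[x]/\langle x^{\theta}-1\rangle$ is already a $\Z_{8}[x]$-module. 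For the first component I must check two independences at once: reduction of $d(x)$ modulo $2$ and reduction of $f(x)$ modulo $x^{\alpha}-1$. Writing $\bar{d}(x)=d(x)\bmod 2$ and $\bar{f}(x)=f(x)\bmod 2$, I would note that $d(x)f(x)\bmod 2 = \bar{d}(x)\bar{f}(x)$ in $\Z_{2}[x]$, and then reduce modulo $x^{\alpha}-1$; since $x^{\alpha}-1$ has coefficients that survive reduction mod $2$, the map $\Z_{8}[x]\to\Z_{2}[x]\to\Z_{2}[x]/\langle x^{\alpha}-1\rangle$ is a ring homomorphism, so the first component depends only on the class of $f(x)$ in $\Z_{2}[x]/\langle x^{\alpha}-1\rangle$ and on $d(x)$, and moreover factors through $\bar d(x)$. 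The second component is handled identically with $2$ replaced throughout and $\Z_{4}$ in place of $\Z_{2}$.

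Next I would check the module axioms. The key observation, which I would state once and reuse, is that for each of the three coordinates the assignment $d(x)\mapsto\bigl(\text{multiplication by }d(x)\text{ in that coordinate ring}\bigr)$ is a ring homomorphism from $\Z_{8}[x]$ to the endomorphism ring of that coordinate: for the first coordinate it is the composite $\Z_{8}[x]\twoheadrightarrow\Z_{2}[x]\twoheadrightarrow\Z_{2}[x]/\langle x^{\alpha}-1\rangle$ followed by the regular representation, for the second the analogous composite through $\Z_{4}[x]/\langle x^{\beta}-1\rangle$, and for the third simply $\Z_{8}[x]\twoheadrightarrow\Z_{8}[x]/\langle x^{\theta}-1\rangle$ followed by its regular representation. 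Since $\R$ is the direct product of the three coordinate rings and $\ast$ acts componentwise, the four module axioms --- $(d_1(x)+d_2(x))\ast c(x)=d_1(x)\ast c(x)+d_2(x)\ast c(x)$, $d(x)\ast(c_1(x)+c_2(x))=d(x)\ast c_1(x)+d(x)\ast c_2(x)$, $(d_1(x)d_2(x))\ast c(x)=d_1(x)\ast(d_2(x)\ast c(x))$, and $1\ast c(x)=c(x)$ --- follow coordinatewise from the fact that each coordinate is already a module over its own polynomial ring and that each structure map is a ring homomorphism (so it is additive and unital and respects products). I would write out the associativity check $(d_1 d_2)\ast(f,g,h)=\bigl(d_1 d_2 f \bmod 2,\dots\bigr)=\bigl(d_1(d_2 f\bmod 2)\bmod 2,\dots\bigr)$ as the one place where the compatibility of the mod-$2$ (resp.\ mod-$4$) reduction with polynomial multiplication is visibly used, and leave the others as routine.

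The only genuinely delicate point --- and the step I would flag as the main obstacle, though it is more a bookkeeping subtlety than a real difficulty --- is the interaction between the two reductions in the first and second coordinates: one must be careful that reducing $d(x)$ modulo $2$ commutes with reducing modulo $x^{\alpha}-1$ and with multiplication by $f(x)$, i.e.\ that $\bigl(d(x)f(x)\bmod 2\bigr)\bmod(x^{\alpha}-1)$ equals $\bigl(\bar d(x)\bar f(x)\bmod(x^{\alpha}-1)\bigr)$ regardless of the order in which one performs the operations. This is exactly the assertion that $\Z_{8}[x]\to\Z_{2}[x]/\langle x^{\alpha}-1\rangle$ is well-defined as a ring homomorphism, which holds because the ideal $\langle 2\rangle$ maps into $\langle x^{\alpha}-1\rangle$'s complement appropriately --- concretely, because $x^{\alpha}-1$ has integer (hence parity-stable) coefficients, reduction mod $2$ descends to the quotient. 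Once this is isolated and checked, everything else is a mechanical componentwise verification, so I would present the proof as: (1) the structure-map-is-a-ring-homomorphism lemma for each coordinate; (2) well-definedness as an immediate corollary; (3) the four axioms as componentwise consequences, spelling out only associativity.
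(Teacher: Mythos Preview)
Your proposal is correct. The paper states this lemma without proof, treating it as a routine verification, so there is no argument to compare against. Your plan is in fact more careful than anything the paper supplies: isolating, for each coordinate, the structure map $\Z_{8}[x]\to\Z_{2}[x]/\langle x^{\alpha}-1\rangle$ (resp.\ $\Z_{4}[x]/\langle x^{\beta}-1\rangle$, $\Z_{8}[x]/\langle x^{\theta}-1\rangle$) as a ring homomorphism is exactly the right way to handle well-definedness and the module axioms in one stroke, and your identification of the commutation of the two reductions (mod $2$ and mod $x^{\alpha}-1$) as the only point requiring attention is accurate.

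One minor wording quibble: the phrase ``the ideal $\langle 2\rangle$ maps into $\langle x^{\alpha}-1\rangle$'s complement appropriately'' is opaque and does not say what you want. What you actually need, and what you state correctly in the next clause, is simply that $\Z_{8}[x]\twoheadrightarrow\Z_{2}[x]$ and $\Z_{2}[x]\twoheadrightarrow\Z_{2}[x]/\langle x^{\alpha}-1\rangle$ are each ring homomorphisms, hence so is their composite; equivalently, the kernel of the composite is the ideal $\langle 2,\,x^{\alpha}-1\rangle$ of $\Z_{8}[x]$. I would drop the ``complement'' sentence and keep only the concrete justification.
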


\subsection{The Structure of $\M$-Cyclic Codes}

In this subsection, we study submodules of $\M$. We describe the generators of such submodules and give their spanning sets. We always assume that $\alpha,~\beta$ and $\theta$ are all odd integers.
We have mentioned about that $\C$ is a $\Z_{8}[x]$ submodule of $\R$. Hence we define the following map.
\begin{eqnarray*}
\Psi:\C\longrightarrow \Z_{8}[x]/\langle x^{\theta}-1\rangle\\
\left(f(x),g(x),h(x)\right)\longrightarrow h(x)
\end{eqnarray*}

It can be easily seen that $\Psi$ is a module homomorphism. Furthermore, the image $\Psi(\C)$ of $\C$ is $\Z_{8}[x]$-submodule of $\Z_{8}[x]/ \langle x^{\theta}-1\rangle$ and $ker(\Psi)$ is a submodule of $\C$. Since we can look at to the image $\Psi(\C)$ as an ideal of the ring $\Z_{8}[x]/ \langle x^{\theta}-1\rangle$ and $\theta$ is an odd integer, from \cite{Calder} we can write
\[
\Psi(\C)=\langle p(x)+2q(x)+4r(x) \rangle~ \text{with}~r(x)|q(x)|p(x)|x^{\theta}-1~mod~8.
\]
Also the kernel of $\Psi$ is
\[
ker(\Psi)=\{\left(f(x),g(x),0\right)\in \C:f(x)\in \Z_{2}[x]/\langle x^{\alpha}-1 \rangle,g(x)\in \Z_{4}[x]/\langle x^{\beta}-1 \rangle \}.
\]

Now, define the set

\[
I=\{\left(f(x),g(x)\right)\in \Z_{2}[x]/\langle x^{\alpha}-1\rangle \times \Z_{4}[x]/\langle x^{\beta}-1\rangle :\left(f(x),g(x),0\right)\in ker(\Psi) \}.
\]
 It is obvious that $I$ is a $\Z_{2}\Z_{4}$-additive cyclic code in $\Z_{2}[x]/\langle x^{\alpha}-1\rangle \times \Z_{4}[x]/\langle x^{\beta}-1\rangle$. Therefore, from \cite{Abu} $I$ can be written in the following form.

 \[
I=\langle\left(f(x),0\right),\left(l_{1}(x),g(x)\right)\rangle
\]

where $f(x)|x^{\alpha}-1~mod~2$ and $g(x)=g_{1}(x)+2a_{1}(x)$ is a polynomial in $\Z_{4}[x]/\langle x^{\beta}-1\rangle$ with $a_{1}(x)|g_{1}(x)|x^{\beta}-1~mod~4$ and $l_{1}(x)$ is a binary polynomial of degree less than $f(x)$ such that $f(x)|\frac{x^{\beta}-1}{a_{1}(x)}l_{1}(x)$. Let $\left(c_{1}(x),c_{2}(x),0\right) \in ker(\Psi)$ then we have $\left(c_{1}(x),c_{2}(x)\right) \in I=\langle\left(f(x),0\right),\left(l_{1}(x),g(x)\right)\rangle$. So, for polynomials $m_{1}(x)\in \Z_{2}[x]/\langle x^{\alpha}-1 \rangle$ and $m_{2}(x)\in \Z_{4}[x]/\langle x^{\beta}-1 \rangle$,
\[
\left(c_{1}(x),c_{2}(x)\right)=m_{1}(x)*\left(f(x),0\right)+m_{2}(x)*\left(l_{1}(x),g_{1}(x)+2a_{1}(x) \right).
\]

Hence we can conclude that $ker(\Psi)$ is a submodule of $\C$ generated by

\[
ker(\Psi)=\langle \left(f(x),0,0\right), \left(l_{1}(x),g_{1}(x)+2a_{1}(x),0\right)\rangle.
\]

Finally, by the First Isomorphism Theorem we have

\[
\C/ker(\Psi)\cong \langle p(x)+2q(x)+4r(x)\rangle.
\]

Now, let $\left(l_{2}(x),g_{2}(x),p(x)+2q(x)+4r(x)\right)\in \C$ such that

\[
\Psi\left(l_{2}(x),g_{2}(x),p(x)+2q(x)+4r(x)\right)=\langle p(x)+2q(x)+4r(x)\rangle.
\]

Consequently, any $\M$-cyclic code $\C$ can be generated as a $\Z_{8}[x]$-submodule of $\R$ with the following form.
\[
\C=\langle \left(f(x),0,0\right),\left(l_{1}(x),g_{1}(x)+2a_{1}(x),0\right),\left(l_{2}(x),g_{2}(x),p(x)+2q(x)+4r(x)\right)\rangle.
\]

\begin{lemm}
Let $\C$ be a $\M$-cyclic code generated by
$$\left(\left(f(x),0,0\right),\left(l_{1}(x),g_{1}(x)+2a_{1}(x),0\right),\left(l_{2}(x),g_{2}(x),p(x)+2q(x)+4r(x)\right)\right).$$
Then $\deg(l_{1}(x))<\deg(f(x)),\deg(l_{2}(x))<\deg(f(x))$ and $\deg(g_{2}(x))<\deg(g_{1}(x))$.
\end{lemm}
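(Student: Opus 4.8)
The plan is to reduce each degree inequality to an application of the division algorithm together with the minimality built into the generators, exactly as in the $\Z_2\Z_4$-case of \cite{Abu}. The key point is that $\C$ is generated by the three listed elements as a $\Z_8[x]$-module, so replacing a generator by its remainder under a suitable division (when the division does not disturb the other coordinates) produces another generating set; if the original generator had strictly larger degree than allowed, the new set would contradict the defining role of the generator it is reduced against. I would treat the three inequalities in turn.

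First I would handle $\deg(l_1(x))<\deg(f(x))$. Recall $l_1(x)\in\Z_2[x]/\langle x^\alpha-1\rangle$ comes from the $\Z_2\Z_4$-cyclic code $I=\langle(f(x),0),(l_1(x),g(x))\rangle$, and in \cite{Abu} $l_1(x)$ is already taken to be a binary polynomial of degree less than $\deg(f(x))$; I would simply invoke that. Alternatively, I would argue directly: if $\deg(l_1(x))\ge\deg(f(x))$, divide $l_1(x)$ by $f(x)$ in $\Z_2[x]$, write $l_1(x)=f(x)s(x)+l_1'(x)$ with $\deg(l_1'(x))<\deg(f(x))$, and replace $(l_1(x),g_1(x)+2a_1(x),0)$ by $(l_1'(x),g_1(x)+2a_1(x),0)=(l_1(x),g_1(x)+2a_1(x),0)-s(x)*(f(x),0,0)$. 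This keeps the second and third coordinates fixed and still generates $\C$, so we may assume $\deg(l_1(x))<\deg(f(x))$ from the outset.

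Next, $\deg(l_2(x))<\deg(f(x))$: the same trick works, subtracting an appropriate $\Z_2[x]$-multiple of $(f(x),0,0)$ from $(l_2(x),g_2(x),p(x)+2q(x)+4r(x))$ changes only the first coordinate, so we may reduce $l_2(x)$ modulo $f(x)$ without touching $g_2(x)$ or $p(x)+2q(x)+4r(x)$. For $\deg(g_2(x))<\deg(g_1(x))$: here I would use the generator $(l_1(x),g_1(x)+2a_1(x),0)$, noting that $g_1(x)$ is (essentially) monic as a divisor of $x^\beta-1\bmod 4$ so division in $\Z_4[x]$ by $g_1(x)+2a_1(x)$ behaves well modulo the relevant powers of $2$; subtracting a $\Z_4[x]$-multiple of $(l_1(x),g_1(x)+2a_1(x),0)$ from $(l_2(x),g_2(x),p(x)+2q(x)+4r(x))$ changes the first two coordinates but fixes the $\Z_8[x]$-coordinate, and the first coordinate can then be re-reduced modulo $f(x)$ as above. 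Thus we obtain a generating set with $\deg(g_2(x))<\deg(g_1(x))$ and all degree constraints simultaneously met.

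The main obstacle I anticipate is the bookkeeping over $\Z_4$ in the last step: $g_1(x)+2a_1(x)$ is not literally monic, so one must check that reduction of $g_2(x)$ against it really does lower the degree and does not get stuck because of zero divisors — this is why $g_1(x)$ being a monic divisor of $x^\beta-1$ over $\Z_4$ (up to the $2a_1(x)$ correction) is essential, and one argues degree by degree, cancelling the leading term of $g_2(x)$ using the leading term of $g_1(x)$. One also has to confirm that after fixing $g_2(x)$ the subsequent reduction of $l_2(x)$ modulo $f(x)$ does not re-inflate $\deg(g_2(x))$, which is immediate since that step only subtracts multiples of $(f(x),0,0)$. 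Everything else is routine application of the division algorithm in $\Z_2[x]$, $\Z_4[x]$ and $\Z_8[x]/\langle x^\theta-1\rangle$, mirroring the structure theorem already recalled from \cite{Abu} and \cite{Calder}.
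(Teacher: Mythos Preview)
Your proposal is correct and follows essentially the same approach as the paper: both argue that one may subtract suitable $\Z_8[x]$-multiples of the earlier generators to lower the degree of $l_1$, $l_2$, and $g_2$ without changing the code generated. The only cosmetic difference is that the paper cancels one leading term at a time (subtracting $x^i\ast(f(x),0,0)$ with $i=\deg(l_1)-\deg(f)$ and implicitly iterating), whereas you invoke the full division algorithm in one step; you are also more explicit than the paper about the $\Z_4$ bookkeeping needed for the $g_2$ case, which the paper dismisses with ``the similar method can be used.''
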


\begin{proof}
Assume that $\deg(l_{1}(x))\geq\deg(f(x))$ with $\deg(l_{1}(x))-\deg(f(x))=i$. Now, let $\D$ be a $\M$-cyclic code with the generators of the form
\begin{eqnarray*}
&&\left(\left(f(x),0,0\right),\left(l_{1}(x)+x^{i}f(x),g_{1}(x)+2a_{1}(x),0\right)\right)\\
&&=\left(\left(f(x),0,0\right),\left(l_{1}(x),g_{1}(x)+2a_{1}(x),0\right)\right)+x^{i}*\left(f(x),0,0\right).
\end{eqnarray*}
Then, it is obvious that $\D\subseteq \C$. On the other hand,
\begin{eqnarray*}
\left(\left(l_{1}(x),g_{1}(x)+2a_{1}(x),0\right)\right)=\left(\left(l_{1}(x)+x^{i}f(x),g_{1}(x)+2a_{1}(x),0\right)\right)-x^{i}*\left(f(x),0,0\right).
\end{eqnarray*}
Therefore we have $\C\subseteq\D$ and hence $\C=\D$. The similar method can be used to prove $\deg(l_{2}(x))<\deg(f(x))$ and $\deg(g_{2}(x))<\deg(g_{1}(x))$.

\end{proof}

\begin{lemm}
If $\C=\langle\left(f(x),0,0\right),\left(l_{1}(x),g_{1}(x)+2a_{1}(x),0\right),\left(l_{2}(x),g_{2}(x),p(x)+2q(x)+4r(x)\right)\rangle$ is a $\M$-cyclic then we may assume that
\begin{enumerate}[label=(\roman*)]

\item
$f(x)|\frac{x^{\beta}-1}{a_{1}(x)}l_{1}(x)~mod~2$,

\item
$\left(g_{1}(x)+2a_{1}(x)\right)|\frac{x^{\theta}-1}{r(x)}g_{2}(x)~mod~4$,

\item

$f(x)|k(x)l_{1}(x)+\frac{x^{\theta}-1}{r(x)}l_{2}(x)~mod~2$ where $k(x)\left(g_{1}(x)+2a_{1}(x)\right)=\frac{x^{\theta}-1}{r(x)}g_{2}(x)$.

\end{enumerate}
\end{lemm}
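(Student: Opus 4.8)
The plan is to argue each of the three divisibility conditions by the same mechanism used in the previous lemma: if the asserted condition fails, one produces an alternative generator with a strictly smaller degree (or a generator that can be absorbed into the others) without changing the code $\C$, so after finitely many such reductions the condition may be assumed to hold. The key observation in each case is that multiplying a generator of $\C$ by the appropriate quotient polynomial $\frac{x^{\beta}-1}{a_1(x)}$ or $\frac{x^{\theta}-1}{r(x)}$ annihilates one of its three polynomial components (using the reductions $\bmod\ 2$ and $\bmod\ 4$ built into the scalar multiplication $\ast$), so the resulting element of $\C$ lives in a smaller subcode whose structure is already known.

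First I would prove (i). Consider $\frac{x^{\beta}-1}{a_1(x)}\ast\left(l_1(x),g_1(x)+2a_1(x),0\right)\in\C$. Since $a_1(x)\mid g_1(x)$, reducing the middle component modulo $4$ kills $g_1(x)$, and since $a_1(x)^2$ divides $x^\beta-1$ is not assumed, one checks that $\frac{x^\beta-1}{a_1(x)}\cdot 2a_1(x)\equiv 0\bmod 4$ because $\frac{x^\beta-1}{a_1(x)}a_1(x)=x^\beta-1\equiv 0$, hence $2\cdot\frac{x^\beta-1}{a_1(x)}a_1(x)\equiv 0\bmod 4$; therefore this product equals $\left(\frac{x^\beta-1}{a_1(x)}l_1(x)\bmod 2,\,0,\,0\right)$. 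As an element of $\C$ whose last two components vanish, it lies in $\langle\left(f(x),0,0\right)\rangle$ as a submodule of $\Z_2[x]/\langle x^\alpha-1\rangle$, which forces $f(x)\mid \frac{x^\beta-1}{a_1(x)}l_1(x)\bmod 2$ — or, if not, subtracting a suitable multiple of $f(x)$ from $l_1(x)$ (which does not leave $\C$, exactly as in the degree-reduction argument of the preceding lemma) lets us replace $l_1(x)$ by a representative for which it does. Condition (ii) is entirely parallel: I would multiply $\left(l_2(x),g_2(x),p(x)+2q(x)+4r(x)\right)$ by $\frac{x^\theta-1}{r(x)}$, use $r(x)\mid q(x)\mid p(x)$ together with $\frac{x^\theta-1}{r(x)}r(x)=x^\theta-1\equiv 0\bmod 8$ to kill the last component modulo $8$, obtaining an element with zero $\Z_8$-part, hence in $\ker(\Psi)=\langle\left(f(x),0,0\right),\left(l_1(x),g_1(x)+2a_1(x),0\right)\rangle$; projecting to the $\Z_4$-component and invoking the structure of $I$ gives $\left(g_1(x)+2a_1(x)\right)\mid\frac{x^\theta-1}{r(x)}g_2(x)\bmod 4$.

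For (iii), having established (ii), write $k(x)\left(g_1(x)+2a_1(x)\right)=\frac{x^\theta-1}{r(x)}g_2(x)$ and form $\frac{x^\theta-1}{r(x)}\ast\left(l_2(x),g_2(x),p(x)+2q(x)+4r(x)\right)-k(x)\ast\left(l_1(x),g_1(x)+2a_1(x),0\right)\in\C$. By construction the $\Z_4$-component is $0$ and the $\Z_8$-component is $0$, so this element is $\left(\frac{x^\theta-1}{r(x)}l_2(x)-k(x)l_1(x)\bmod 2,\,0,\,0\right)\in\langle\left(f(x),0,0\right)\rangle$, which gives $f(x)\mid k(x)l_1(x)+\frac{x^\theta-1}{r(x)}l_2(x)\bmod 2$ (signs being irrelevant mod $2$); again, if it fails, subtract an appropriate multiple of $\left(f(x),0,0\right)$ from the generator $\left(l_2(x),g_2(x),p(x)+2q(x)+4r(x)\right)$ to adjust $l_2(x)$ without altering $\C$.

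The main obstacle I anticipate is bookkeeping the modular reductions carefully: because $\ast$ reduces the three components modulo $2$, $4$, and $8$ respectively, one must verify at each step that the "mixed" terms such as $\frac{x^\theta-1}{r(x)}\cdot 2q(x)$ and $\frac{x^\theta-1}{r(x)}\cdot 4r(x)$ really do vanish in the relevant ring — this uses the chain $r(x)\mid q(x)\mid p(x)\mid x^\theta-1$ crucially, and an analogous point for $a_1(x)\mid g_1(x)\mid x^\beta-1$. The degree-reduction / "we may assume" phrasing also needs the same termination argument as in the previous lemma (each replacement strictly lowers a degree), but that is routine once the annihilation computations are pinned down.
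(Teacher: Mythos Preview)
Your core argument is exactly the paper's: for (ii) multiply the third generator by $\frac{x^{\theta}-1}{r(x)}$ to land in $\ker(\Psi)$ and read off the $\Z_4$-divisibility, and for (iii) subtract $k(x)$ times the second generator to kill the middle component as well and land in $\langle(f(x),0,0)\rangle$; part (i) is handled in the paper by citing the analogous result for $\Z_2\Z_4$-cyclic codes, which is precisely the computation you spell out. One minor point: the ``if it fails, adjust the generator'' framing is unnecessary here --- once you know the element lies in $\ker(\Psi)$ (respectively in the subcode with two vanishing components), the divisibilities hold outright, so the degree-reduction alternative never actually fires.
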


\begin{proof}
\begin{enumerate}[label=(\roman*)]

\item
It can be easily seen from Lemma 10 in \cite{Abu}.

\item
Let us consider
\begin{eqnarray*}
\frac{x^{\theta}-1}{r(x)}*\left(l_{2}(x),g_{2}(x),p(x)+2q(x)+4r(x)\right)=\left(\frac{x^{\theta}-1}{r(x)}l_{2}(x),\frac{x^{\theta}-1}{r(x)}g_{2}(x),0\right).
\end{eqnarray*}
Since
\begin{eqnarray*}
\Psi\left(\frac{x^{\theta}-1}{r(x)}l_{2}(x),\frac{x^{\theta}-1}{r(x)}g_{2}(x),0\right)=0,
\end{eqnarray*}
$\left(\frac{x^{\theta}-1}{r(x)}l_{2}(x),\frac{x^{\theta}-1}{r(x)}g_{2}(x),0\right)\in ker(\Psi)\subseteq \C$ and therefore $\left(g_{1}(x)+2a_{1}(x)\right)|\frac{x^{\theta}-1}{r(x)}g_{2}(x)~mod~4$.

\item

Since $\left(g_{1}(x)+2a_{1}(x)\right)|\frac{x^{\theta}-1}{r(x)}g_{2}(x)~mod~4$ then we can write
\begin{eqnarray*}
\left(g_{1}(x)+2a_{1}(x)\right)k(x)=\frac{x^{\theta}-1}{r(x)}g_{2}(x),~k(x)\in \Z_{4}[x].
\end{eqnarray*}
Now, consider
\begin{eqnarray*}
k(x)*\left(l_{1}(x),g_{1}(x)+2a_{1}(x),0\right)=\left(k(x)l_{1}(x),k(x)\left(g_{1}(x)+2a_{1}(x)\right),0\right) \in \C.
\end{eqnarray*}
On the other hand, $\left(\frac{x^{\theta}-1}{r(x)}l_{2}(x),\frac{x^{\theta}-1}{r(x)}g_{2}(x),0\right)\in \C$. Hence,
\begin{eqnarray*}
&&\left(k(x)l_{1}(x),k(x)\left(g_{1}(x)+2a_{1}(x)\right),0\right)-\left(\frac{x^{\theta}-1}{r(x)}l_{2}(x),\frac{x^{\theta}-1}{r(x)}g_{2}(x),0\right)\\
&&=\left(k(x)l_{1}(x)+\frac{x^{\theta}-1}{r(x)}l_{2}(x),0,0\right)\in ker(\Psi)\subseteq \C.
\end{eqnarray*}

So, $f(x)|k(x)l_{1}(x)+\frac{x^{\theta}-1}{r(x)}l_{2}(x)~mod~2$.

\end{enumerate}
\end{proof}

We summarize all this discussion with the following theorem.

\begin{theo}\label{main}
Let $\C$ be a $\M$-cyclic code. Then we can identify $\C$ as

\begin{enumerate}[label=(\roman*)]
\item
$\C=\langle f(x),0,0 \rangle$ with $f(x)|(x^{\alpha}-1)~mod~2$.

\item
$\C=\langle \left(l_{1}(x),g_{1}(x)+2a_{1}(x),0 \right)\rangle$ with $a_{1}(x)|g_{1}(x)|(x^{\beta}-1)~mod~4$ and $l_{1}(x)$ is a binary polynomial of degree less than $f(x)$ such that $f(x)|\frac{x^{\beta}-1}{a_{1}(x)}l_{1}(x)~mod~2$.

\item
$\C=\langle \left(l_{2}(x),g_{2}(x),p(x)+2q(x)+4r(x)\right)\rangle$, where $r(x)|q(x)|p(x)|(x^{\theta}-1)~mod~8$, and $g_{2}(x))$ is a polynomial in $\Z_{4}[x]$ of degree less than $g_{1}(x)$ such that $\left(g_{1}(x)+2a_{1}(x)\right)|\frac{x^{\theta}-1}{r(x)}g_{2}(x)~mod~4$. And also $l_{2}(x)$ is a binary polynomial of degree less than $f(x)$ with $f(x)|k(x)l_{1}(x)+\frac{x^{\theta}-1}{r(x)}l_{2}(x)~mod~2$ where $k(x)\left(g_{1}(x)+2a_{1}(x)\right)=\frac{x^{\theta}-1}{r(x)}g_{2}(x)$.

\item

$\C=\langle \left(f(x),0,0\right),\left(l_{1}(x),g_{1}(x)+2a_{1}(x),0\right),\left(l_{2}(x),g_{2}(x),p(x)+2q(x)+4r(x)\right)\rangle$ where the generator polynomials are defined as above.

\end{enumerate}

\end{theo}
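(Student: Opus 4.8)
The plan is to derive the four statements from the structural analysis already carried out, organized according to which of the three blocks $\Z_2^{\alpha}$, $\Z_4^{\beta}$, $\Z_8^{\theta}$ the cyclic code actually meets. The engine is the short exact sequence
\[
0\longrightarrow \ker(\Psi)\longrightarrow \C \stackrel{\Psi}{\longrightarrow} \Psi(\C)\longrightarrow 0
\]
of $\Z_{8}[x]$-modules, together with the First Isomorphism Theorem $\C/\ker(\Psi)\cong\Psi(\C)$ which was recorded above.

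First I would dispose of the two degenerate cases. For (i), when $\C$ is supported only on the binary coordinates, $\C$ is an ideal of $\Z_{2}[x]/\langle x^{\alpha}-1\rangle$; since $\alpha$ is odd, $x^{\alpha}-1$ is separable over $\Z_2$ and this ring is a principal ideal ring, so $\C=\langle \left(f(x),0,0\right)\rangle$ with $f(x)\mid x^{\alpha}-1~mod~2$. For (ii), when $\C$ lives on the $\Z_2$ and $\Z_4$ coordinates, $\C$ is carried by the set $I$ introduced above, which is a $\Z_2\Z_4$-additive cyclic code; the structure theorem of \cite{Abu} then supplies the generators $\left(f(x),0\right)$ and $\left(l_{1}(x),g_{1}(x)+2a_{1}(x)\right)$ together with $a_{1}(x)\mid g_{1}(x)\mid x^{\beta}-1~mod~4$, $\deg l_{1}<\deg f$, and $f(x)\mid\frac{x^{\beta}-1}{a_{1}(x)}l_{1}(x)~mod~2$.

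Next I would treat the general code, which yields (iv). Because $\theta$ is odd, $\Psi(\C)$ is an ideal of $\Z_{8}[x]/\langle x^{\theta}-1\rangle$, hence by \cite{Calder} equals $\langle p(x)+2q(x)+4r(x)\rangle$ with $r(x)\mid q(x)\mid p(x)\mid x^{\theta}-1~mod~8$. The kernel $\ker(\Psi)$ is identified with $I$ and is therefore generated, as a submodule of $\C$, by $\left(f(x),0,0\right)$ and $\left(l_{1}(x),g_{1}(x)+2a_{1}(x),0\right)$ as in the previous paragraph. Choosing any preimage $\left(l_{2}(x),g_{2}(x),p(x)+2q(x)+4r(x)\right)\in\C$ of the generator of $\Psi(\C)$ and invoking $\C/\ker(\Psi)\cong\Psi(\C)$, every codeword of $\C$ is a $\Z_{8}[x]$-combination of these three elements, which is exactly the form asserted in (iv). The normalizations $\deg l_{1}<\deg f$, $\deg l_{2}<\deg f$, $\deg g_{2}<\deg g_{1}$ are then imposed by the degree lemma above (replacing a generator by itself plus a polynomial multiple of a lower generator leaves $\C$ unchanged), and the three divisibility relations — $f(x)\mid\frac{x^{\beta}-1}{a_{1}(x)}l_{1}(x)$, $\left(g_{1}(x)+2a_{1}(x)\right)\mid\frac{x^{\theta}-1}{r(x)}g_{2}(x)$, and $f(x)\mid k(x)l_{1}(x)+\frac{x^{\theta}-1}{r(x)}l_{2}(x)$ with $k(x)\left(g_{1}(x)+2a_{1}(x)\right)=\frac{x^{\theta}-1}{r(x)}g_{2}(x)$ — are precisely the content of the lemma immediately preceding, obtained by driving suitable multiples of the generators into $\ker(\Psi)$ and then into $I$. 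Finally, (iii) is read off as the subcode of (iv) generated by the third generator alone, while (i) and (ii) are the specializations in which $k_{1}=\cdots=k_{5}=0$ and $k_{3}=k_{4}=k_{5}=0$ respectively.

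The step I expect to be the genuine obstacle — were the supporting lemmas not already established — is the coupled divisibility condition $f(x)\mid k(x)l_{1}(x)+\frac{x^{\theta}-1}{r(x)}l_{2}(x)$, since it ties the two ``glue'' polynomials $l_{1}$ and $l_{2}$ together through the quotient $k(x)$ coming from the $\Z_4$-part and requires exploiting membership in $\ker(\Psi)$ and in $I$ simultaneously; with that lemma in hand, however, the present theorem amounts to bookkeeping.
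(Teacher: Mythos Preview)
Your proposal is correct and mirrors the paper's approach exactly: the paper presents Theorem~\ref{main} with the preface ``We summarize all this discussion with the following theorem'' and gives no separate proof, since the statement is precisely the aggregation of the $\Psi$-homomorphism analysis, the First Isomorphism Theorem, the degree lemma, and the divisibility lemma that precede it. Your write-up simply makes this aggregation explicit, including the handling of the degenerate cases (i)--(iii) as specializations of (iv).
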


\begin{theo}\label{sset}
Let $$\C=\langle \left(f(x),0,0\right),\left(l_{1}(x),g_{1}(x)+2a_{1}(x),0\right),\left(l_{2}(x),g_{2}(x),p(x)+2q(x)+4r(x)\right)\rangle$$ be a $\M$-cyclic code with the generator polynomials defined above. Define the sets
\begin{eqnarray*}
S_1&=&\bigcup^{\deg(h_{f}(x))-1}_{i=0} \Big\{  x^i\ast(f(x),0,0)\Big\}, \\
S_2&=&\bigcup^{\deg(h_{1}(x))-1}_{i=0}\Big\{x^i\ast(l_{1}(x),g_{1}(x)+2a_{1}(x),0)\Big\}, \\
S_3&=& \bigcup^{\deg(h_p(x))-1}_{i=0}\Big\{x^i\ast(l_{2}(x),g_{2}(x),p(x)+2q(x)+4r(x))\Big\},  \\
S_4&=& \bigcup^{\deg(\hat{q}(x))-1}_{i=0}\Big\{x^i\ast(h_{p}(x)l_{2}(x),h_{p}(x)g_{2}(x),2h_{p}(x)\left(q(x)+2r(x)\right))\Big\},\\
S_5&=& \bigcup^{\deg(\hat{r}(x))-1}_{i=0}\Big\{x^i\ast(h_{q}(x)l_{2}(x),h_{q}(x)g_{2}(x),4h_{q}(x)r(x))\Big\},\text{  and }\\
S_6&=& \bigcup^{\deg(b_{1}(x))-1}_{i=0}\Big\{x^i\ast(h_{1}(x)l_{1}(x),2a_{1}(x)h_{1}(x),0)\Big\}
\end{eqnarray*}
where $f(x)h_{f}(x)=x^{\alpha}-1$, $g_{1}(x)h_{1}(x)=x^{\beta}-1,~g_{1}(x)=a_{1}(x)b_{1}(x)$, $\hat{q}(x)=\frac{p(x)}{q(x)},~\hat{r}(x)=\frac{q(x)}{r(x)}$ and $p(x)h_{p}(x)=q(x)h_{q}(x)=x^{\theta}-1$.
Then, $S=S_1\cup S_2\cup S_3\cup S_4\cup S_5\cup S_6$ forms a minimal generating set for the $\M$-cyclic code $\C$ and $\C$ has $2^{\deg(h_{f})}\cdot4^{\deg(h_{1})}\cdot8^{\deg(h_{p})}\cdot4^{\deg(\hat{q})}\cdot2^{\deg(\hat{r})}\cdot2^{\deg(b_{1})}$ codewords.
\end{theo}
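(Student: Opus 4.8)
The plan is to show first that $S$ spans $\C$ and then that its cardinality equals $\deg(h_f)+2\deg(h_1)+3\deg(h_p)+2\deg(\hat q)+\deg(\hat r)+\deg(b_1)$, which in turn forces it to be a \emph{minimal} generating set and yields the claimed size of $\C$. For the spanning part I would start from the three generators of $\C$ given in Theorem \ref{main}(iv) and argue that any $\Z_8[x]$-combination of them can be reduced, modulo the relations coming from $x^\alpha-1$, $x^\beta-1$, $x^\theta-1$, to a combination of the listed shifts in $S_1,\dots,S_6$. Concretely: a multiple $m_3(x)\ast(l_2,g_2,p+2q+4r)$ can be divided by $h_p(x)$ with remainder, the remainder contributing the $S_3$-shifts; the quotient part, multiplied through, gives $h_p(x)(l_2,g_2,p+2q+4r)$ whose last coordinate is $h_p(x)(p+2q+4r)=2h_p(x)(q+2r)$ (since $h_p p=x^\theta-1\equiv 0$), i.e.\ an element whose $\Z_8$-component lies in $\langle 2\rangle$; dividing the new coefficient by $\hat q(x)$ produces the $S_4$-shifts, and then an analogous step with $h_q(x)$ (using $h_q q = x^\theta-1$, so the last coordinate becomes $4h_q(x)r(x)$) produces the $S_5$-shifts; what remains after all these reductions has zero last coordinate, hence lies in $\ker(\Psi)$, where the $\Z_2\Z_4$-cyclic structure from \cite{Abu} gives exactly the $S_1,S_2,S_6$ shifts (here $S_6$ captures the ``overflow'' of the $\Z_4$-part, $g_1 h_1 \equiv 0$ forcing $h_1 g_1 = 2a_1 h_1 b_1/b_1$... more precisely $h_1 a_1 b_1 \equiv 0$ so $h_1(l_1, g_1+2a_1,0)$ has $\Z_4$-component $2a_1 h_1$). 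I would carry out each of these divisions explicitly in the proof, citing Lemma \ref{sset}'s hypotheses (the divisibility conditions $(i)$--$(iii)$ of the preceding lemma and of Theorem \ref{main}) to guarantee the leftover first coordinates are genuine multiples of $f(x)$ and hence absorbed into $S_1$.

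For the counting part, I would show the reductions above are essentially unique: an element of $\C$ is determined by the six tuples of coefficient-polynomials of degrees $<\deg(h_f)$, $<\deg(h_1)$, $<\deg(h_p)$, $<\deg(\hat q)$, $<\deg(\hat r)$, $<\deg(b_1)$ used against $S_1,\dots,S_6$ respectively, with the coefficients against $S_1,S_5,S_6$ ranging over $\Z_2$, those against $S_2,S_4$ over $\Z_4$, and those against $S_3$ over $\Z_8$. Equivalently, one projects $\C$ onto its $\Z_8$-part via $\Psi$ (image $\langle p+2q+4r\rangle$, whose size as an ideal is $8^{\deg(h_p)}\cdot 4^{\deg\hat q}\cdot 2^{\deg\hat r}$ by the standard chain-ring ideal count, e.g.\ from \cite{Calder}), and onto $\ker(\Psi)\cong I$, the $\Z_2\Z_4$-cyclic code, whose size is $2^{\deg(h_f)}\cdot 4^{\deg(h_1)}$ by \cite{Abu} — wait, the $\Z_4$ contribution of $I$ is $4^{\deg(h_1/? )}$; I would state it as $2^{\deg h_f}\cdot 2^{\deg b_1}\cdot 4^{\deg(h_1)-\deg(b_1)}$... to be pinned down so that the product $|\ker\Psi|\cdot|\Psi(\C)|=|\C|$ matches the claimed formula. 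By the First Isomorphism Theorem $|\C|=|\ker(\Psi)|\cdot|\Psi(\C)|$, and I would check the exponents add up to exactly $\deg(h_f)+2\deg(h_1)+3\deg(h_p)+2\deg(\hat q)+\deg(\hat r)+\deg(b_1)=|S|$. Since $|S|$ elements span a module of size $2^{|S|}$ only if they are independent in the appropriate sense, minimality follows: no proper subset of $S$ can span $\C$.

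The main obstacle I anticipate is bookkeeping the $\Z_4$-part of $\ker(\Psi)=I$ correctly and reconciling it with the $S_6$-generators: $I$ is itself a two-generator $\Z_2\Z_4$-cyclic code $\langle(f,0),(l_1,g_1+2a_1)\rangle$, so its own minimal spanning set (from \cite{Abu}) already contains a ``third'' family of the form $h_1\ast(l_1,g_1+2a_1)$ reduced modulo the torsion — this is precisely $S_6$, and I must make sure the degree $\deg(b_1)$ (with $g_1 = a_1 b_1$) is the right count for that overflow and that it does not double-count against $S_1$ or $S_2$. A secondary delicate point is justifying that after peeling off $S_3,S_4,S_5$ the residual element really has last coordinate $0$ and not merely last coordinate in some larger ideal; this rests on the tower $r|q|p|x^\theta-1$ and the exactness of the division steps, so I would prove it by a short induction on the $2$-adic valuation of the $\Z_8$-component. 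Everything else — the individual polynomial divisions, the orthogonality-free size arithmetic — is routine and I would compress it.
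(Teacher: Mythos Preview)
Your spanning argument is correct and is exactly the paper's approach: write an arbitrary codeword as $e_1\ast(f,0,0)+e_2\ast(l_1,g_1+2a_1,0)+e_3\ast(l_2,g_2,p+2q+4r)$ and reduce each summand by successive divisions (by $h_f$; by $h_1$ then $b_1$; by $h_p$ then $\hat q$ then $\hat r$), using divisibility condition~(iii) to push the final $e_3$-residue into $\mathrm{Span}(S_1\cup S_2)$.

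Two brief remarks. The paper's proof actually stops after establishing spanning and does not argue minimality or the size formula separately; your First Isomorphism Theorem count $|\C|=|\ker\Psi|\cdot|\Psi(\C)|$ is the right way to complete it, and once you pin down $|I|=2^{\deg h_f}\cdot 4^{\deg h_1}\cdot 2^{\deg b_1}$ from \cite{Abu} and $|\Psi(\C)|=8^{\deg h_p}\cdot 4^{\deg\hat q}\cdot 2^{\deg\hat r}$ from \cite{Calder} the formula drops out. Your worry about the last coordinate being exactly~$0$ after the three $\Z_8$-reductions needs no induction: it is the one-line identity $\hat r\cdot h_q\cdot r=(q/r)\cdot\frac{x^\theta-1}{q}\cdot r=x^\theta-1\equiv 0$. (Also, a small slip: the \emph{cardinality} of $S$ is $\sum_i |S_i|$ without the weights $2,3$; those weights appear only in $\log_2|\C|$, reflecting the orders of the respective generators.)
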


\begin{proof}
Let us take any codeword $c(x)$ in $\C$. Therefore, for polynomials $e_{1}(x),e_{2}(x),e_{3}(x)\in \Z_{8}[x]$, $c(x)$ can be written as,
\[
c(x)=e_{1}(x)\ast\left(f(x),0,0\right)+e_{2}(x)\ast\left(l_{1}(x),g_{1}(x)+2a_{1}(x),0\right)+e_{3}(x)\ast\left(l_{2}(x),g_{2}(x),p(x)+2q(x)+4r(x)\right).
\]

If $\deg e_{1}(x)\leq \deg(h_{f}(x))-1$ then $e_{1}(x)\ast\left(f(x),0,0\right)\in Span(S_{1})$. Otherwise, with the help of division algorithm we get

\[
e_{1}(x)=h_{f}(x)q_{1}(x)+r_{1}(x)
\]
with $r_{1}(x)=0$ or $\deg(r_{1}(x))\leq \deg(h_{f}(x))-1$. Hence,
\begin{eqnarray*}
e_{1}(x)\ast \left(f(x),0,0\right)&=&\left(h_{f}(x)q_{1}(x)+r_{1}(x)\right)\ast \left(f(x),0,0\right)\\
&=&r_{1}(x)\ast\left(f(x),0,0\right)\in Span(S_{1}).
\end{eqnarray*}

Now, if $\deg(e_{2}(x))\leq \deg(h_{1}(x))-1$ then $e_{2}(x)\ast\left(l_{1}(x),g_{1}(x)+2a_{1}(x),0\right)\in Span(S_{2})$. Otherwise by using the division algorithm again, we have polynomials $q_{2}(x)$ and $r_{2}(x)$ such that
\[
e_{2}(x)=q_{2}(x)h_{1}(x)+r_{2}(x)
\]
where $r_{2}(x)=0$ or $\deg(r_{2}(x))\leq \deg(h_{1}(x))-1$.
So,
\begin{eqnarray*}
	e_{2}(x)\ast\left(l_{1}(x),g_{1}(x)+2a_{1}(x),0\right)&=&\left(q_{2}(x)h_{1}(x)+r_{2}(x)\right)\ast\left(l_{1}(x),g_{1}(x)+2a_{1}(x),0\right)\\
	&=&q_{2}(x)\ast\left(h_{1}(x)l_{1}(x),2a_{1}(x)h_{1}(x),0\right)+r_{2}(x)\ast\left(l_{1}(x),g_{1}(x)+2a_{1}(x),0\right).
\end{eqnarray*}

Here, $r_{2}(x)\ast\left(l_{1}(x),g_{1}(x)+2a_{1}(x),0\right)\in Span(S_{2})$, so we only consider $q_{2}(x)\ast\left(h_{1}(x)l_{1}(x),2a_{1}(x)h_{1}(x),0\right)$. If $\deg(q_{2}(x))\leq \deg(b_{1}(x))-1$ then $q_{2}(x)\ast\left(h_{1}(x)l_{1}(x),2a_{1}(x)h_{1}(x),0\right)\in Span(S_{6})$, otherwise

\[
q_{2}(x)=q_{3}(x)b_{1}(x)+r_{3}(x),~r_{3}(x)=0~\text{or}~\deg(r_{3}(x))\leq \deg(b_{1}(x))-1.
\]

Hence,
\begin{eqnarray*}
	q_{2}(x)\ast\left(h_{1}(x)l_{1}(x),2a_{1}(x)h_{1}(x),0\right)&=&\left(q_{3}(x)b_{1}(x)+r_{3}(x)\right)\ast\left(h_{1}(x)l_{1}(x),2a_{1}(x)h_{1}(x),0\right)\\
	&=&q_{3}(x)\ast\left(h_{1}(x)l_{1}(x)b_{1}(x),0,0\right)+r_{3}(x)\ast\left(h_{1}(x)l_{1}(x),2a_{1}(x)h_{1}(x),0\right).
\end{eqnarray*}

It is clear that $r_{3}(x)\ast\left(h_{1}(x)l_{1}(x),2a_{1}(x)h_{1}(x),0\right)\in Span(S_{6})$ and also since $f(x)|\frac{x^{\beta}-1}{a_{1}(x)}l_{1}(x)$,
\[
q_{3}(x)\ast\left(h_{1}(x)l_{1}(x)b_{1}(x),0,0\right)=q_{3}(x)\ast\left(\frac{x^{\beta}-1}{a_{1}(x)}l_{1}(x),0,0\right)\in Span(S_{1}).
\]

Now, let us consider $e_{3}(x)\ast\left(l_{2}(x),g_{2}(x),p(x)+2q(x)+4r(x)\right)$.

If $\deg(e_{3}(x))\leq \deg(h_{p}(x))-1$ then we are done, i.e., $e_{3}(x)\ast\left(l_{2}(x),g_{2}(x),p(x)+2q(x)+4r(x)\right)\in Span(S_{3})$. Otherwise
\[
e_{3}(x)=q_{4}(x)h_{p}(x)+r_{4}(x)
\]
where $r_{4}(x)=0$ or $\deg(r_{4}(x))\leq \deg(h_{p}(x))-1$.
Using the same way above, we have
\begin{eqnarray*}
e_{3}(x)\ast\left(l_{2}(x),g_{2}(x),p(x)+2q(x)+4r(x)\right)&=&\left(q_{4}(x)h_{p}(x)+r_{4}(x)\right)\ast\left(l_{2}(x),g_{2}(x),p(x)+2q(x)+4r(x)\right)\\
&=&q_{4}(x)\ast\left(h_{p}(x)l_{2}(x),h_{p}(x)g_{2}(x),2h_{p}(x)q(x)+4h_{p}(x)r(x)\right)\\
&&+r_{4}(x)\ast\left(l_{2}(x),g_{2}(x),p(x)+2q(x)+4r(x)\right).
\end{eqnarray*}

Here, $r_{4}(x)\ast\left(l_{2}(x),g_{2}(x),p(x)+2q(x)+4r(x)\right)$ is included by $S_{3}$, so it only remains $q_{4}(x)\ast\left(h_{p}(x)l_{2}(x),h_{p}(x)g_{2}(x),2h_{p}(x)q(x)+4h_{p}(x)r(x)\right)$. If $\deg(q_{4}(x))\leq \deg( \hat{q}(x))-1$ then $q_{4}(x)\ast\left(h_{p}(x)l_{2}(x),h_{p}(x)g_{2}(x),2h_{p}(x)q(x)+4h_{p}(x)r(x)\right)\in Span(S_{4})$. Otherwise

\[
q_{4}(x)=q_{5}(x)\hat{q}(x)+r_{5}(x),~\text{where}~r_{5}(x)=0~\text{or}~\deg(r_{5}(x))\leq \deg(\hat{q}(x))-1.
\]

Therefore,
\begin{eqnarray*}
&&q_{4}(x)\ast\left(h_{p}(x)l_{2}(x),h_{p}(x)g_{2}(x),2h_{p}(x)q(x)+4h_{p}(x)r(x)\right)=q_{5}(x)\ast\left(h_{q}(x)l_{2}(x),h_{q}(x)g_{2}(x),4h_{q}(x)r(x)\right)\\
&&+r_{5}(x)\ast\left(h_{p}(x)l_{2}(x),h_{p}(x)g_{2}(x),2h_{p}(x)q(x)+4h_{p}(x)r(x)\right).
\end{eqnarray*}

$r_{5}(x)\ast\left(h_{p}(x)l_{2}(x),h_{p}(x)g_{2}(x),2h_{p}(x)q(x)+4h_{p}(x)r(x)\right)$ is in $Span(S_{4})$. For $q_{5}(x)\ast\left(h_{q}(x)l_{2}(x),h_{q}(x)g_{2}(x),4h_{q}(x)r(x)\right)$, if $\deg(q_{5}(x))\leq \deg(\hat{r}(x))-1$ then we are done. Otherwise
\[
q_{5}(x)=q_{6}(x)\hat{r}(x)+r_{6}(x),~r_{6}(x)=0~\text{or}~\deg(r_{6}(x))\leq \deg(\hat{r}(x))-1.
\]
Using the division algorithm for the last time, we get
\begin{eqnarray*}
&&\left(q_{6}(x)\hat{r}(x)+r_{6}(x)\right)\ast\left(h_{q}(x)l_{2}(x),h_{q}(x)g_{2}(x),4h_{q}(x)r(x)\right)\\
&=&q_{6}(x)\ast\left(\frac{x^{\theta}-1}{r(x)}l_{2}(x),\frac{x^{\theta}-1}{r(x)}g_{2}(x),0\right)\\
&&+r_{6}(x)\ast\left(h_{q}(x)l_{2}(x),h_{q}(x)g_{2}(x),4h_{q}(x)r(x)\right).
\end{eqnarray*}

We have $r_{6}(x)\ast\left(h_{q}(x)l_{2}(x),h_{q}(x)g_{2}(x),4h_{q}(x)r(x)\right)\in S_{5}$. Furthermore, from Theorem \ref{main} we know $k(x)\left(g_{1}(x)+2a_{1}(x)\right)=\frac{x^{\theta}-1}{r(x)}g_{2}(x)$ and
\begin{eqnarray*}
f(x)|k(x)l_{1}(x)+\frac{x^{\theta}-1}{r(x)}l_{2}(x)&\Rightarrow& f(x)\mu(x)=k(x)l_{1}(x)+\frac{x^{\theta}-1}{r(x)}l_{2}(x)\\
&\Rightarrow& \frac{x^{\theta}-1}{r(x)}l_{2}(x)= f(x)\mu(x)-k(x)l_{1}(x).
\end{eqnarray*}

Hence,
\begin{eqnarray*}
&&q_{6}(x)\ast\left(\frac{x^{\theta}-1}{r(x)}l_{2}(x),\frac{x^{\theta}-1}{r(x)}g_{2}(x),0\right)\\
&=& q_{6}(x)\ast\left(f(x)\mu(x)-k(x)l_{1}(x),k(x)\left(g_{1}(x)+2a_{1}(x)\right),0\right)\\
&=& q_{6}(x) \left[ \mu(x)\left(f(x),0,0\right)+ k(x)\left(l_{1}(x),g_{1}(x)+2a_{1}(x),0\right)\right]\in \left(S_{1}\cup S_{2}\right).
\end{eqnarray*}

Finally we have proved the theorem.
\end{proof}

\begin{ex}
Let $\C$ be $\M$-cyclic code in $\Z_{2}[x]/\langle x^{15}-1 \rangle\times\Z_{4}[x]/\langle x^7-1\rangle\times\Z_{8}[x]/\langle x^{7}-1 \rangle$ generated by
$$ \left(\left(f(x),0,0\right),\left(l_{1}(x),g_{1}(x)+2a_{1}(x),0\right),\left(l_{2}(x),g_{2}(x),p(x)+2q(x)+4r(x)\right)\right)$$
with
\begin{eqnarray*}
f(x)&=&1+x+x^3+x^5,~l_{1}(x)=l_{2}(x)=1+x^3+x^4,\\
g_{1}(x)&=&1+2 x+3 x^2+x^3+x^4,~a_{1}(x)=3+x+2 x^2+x^3,~g_{2}(x)=3+ x,\\
p(x)&=&1+5 x+7 x^2+2 x^3+x^4,~q(x)=7+x,~r(x)=1.
\end{eqnarray*}

Further we can calculate the following polynomials.
\begin{eqnarray*}
p(x)h_{p}(x)&=&x^{7}-1\Longrightarrow h_{p}(x)=7+5 x+6 x^2+x^3,\\
q(x)h_{q}(x)&=&x^{7}-1\Longrightarrow h_{q}(x)=1+x+x^2+x^3+x^4+x^5+x^6,\\
k(x)\left(g_{1}(x)+2a_{1}(x)\right)&=&\frac{x^{\theta}-1}{r(x)}g_{2}(x)\Longrightarrow k(x)=1+x,\\
g_{1}(x)h_{1}(x)&=&x^{7}-1\Longrightarrow h_{1}(x)=3+2 x+3 x^2+x^3.
\end{eqnarray*}
Hence using the generator sets in Theorem \ref{sset}, we can write the generator matrix for $\C$ as follows.
\[
\left(
\begin{array}{ccccccccccccccccccccccccccccc}
 1 & 1 & 0 & 1 & 0 & 1 & 0 & 0 & 0 & 0 & 0 & 0 & 0 & 0 & 0 & 0 & 0 & 0 & 0 & 0 & 0 & 0 & 0 & 0 & 0 & 0 & 0 & 0 & 0 \\
 0 & 1 & 1 & 0 & 1 & 0 & 1 & 0 & 0 & 0 & 0 & 0 & 0 & 0 & 0 & 0 & 0 & 0 & 0 & 0 & 0 & 0 & 0 & 0 & 0 & 0 & 0 & 0 & 0 \\
 0 & 0 & 1 & 1 & 0 & 1 & 0 & 1 & 0 & 0 & 0 & 0 & 0 & 0 & 0 & 0 & 0 & 0 & 0 & 0 & 0 & 0 & 0 & 0 & 0 & 0 & 0 & 0 & 0 \\
 0 & 0 & 0 & 1 & 1 & 0 & 1 & 0 & 1 & 0 & 0 & 0 & 0 & 0 & 0 & 0 & 0 & 0 & 0 & 0 & 0 & 0 & 0 & 0 & 0 & 0 & 0 & 0 & 0 \\
 0 & 0 & 0 & 0 & 1 & 1 & 0 & 1 & 0 & 1 & 0 & 0 & 0 & 0 & 0 & 0 & 0 & 0 & 0 & 0 & 0 & 0 & 0 & 0 & 0 & 0 & 0 & 0 & 0 \\
 0 & 0 & 0 & 0 & 0 & 1 & 1 & 0 & 1 & 0 & 1 & 0 & 0 & 0 & 0 & 0 & 0 & 0 & 0 & 0 & 0 & 0 & 0 & 0 & 0 & 0 & 0 & 0 & 0 \\
 0 & 0 & 0 & 0 & 0 & 0 & 1 & 1 & 0 & 1 & 0 & 1 & 0 & 0 & 0 & 0 & 0 & 0 & 0 & 0 & 0 & 0 & 0 & 0 & 0 & 0 & 0 & 0 & 0 \\
 0 & 0 & 0 & 0 & 0 & 0 & 0 & 1 & 1 & 0 & 1 & 0 & 1 & 0 & 0 & 0 & 0 & 0 & 0 & 0 & 0 & 0 & 0 & 0 & 0 & 0 & 0 & 0 & 0 \\
 0 & 0 & 0 & 0 & 0 & 0 & 0 & 0 & 1 & 1 & 0 & 1 & 0 & 1 & 0 & 0 & 0 & 0 & 0 & 0 & 0 & 0 & 0 & 0 & 0 & 0 & 0 & 0 & 0 \\
 0 & 0 & 0 & 0 & 0 & 0 & 0 & 0 & 0 & 1 & 1 & 0 & 1 & 0 & 1 & 0 & 0 & 0 & 0 & 0 & 0 & 0 & 0 & 0 & 0 & 0 & 0 & 0 & 0 \\
 1 & 0 & 0 & 1 & 1 & 0 & 0 & 0 & 0 & 0 & 0 & 0 & 0 & 0 & 0 & 3 & 0 & 3 & 3 & 1 & 0 & 0 & 0 & 0 & 0 & 0 & 0 & 0 & 0 \\
 0 & 1 & 0 & 0 & 1 & 1 & 0 & 0 & 0 & 0 & 0 & 0 & 0 & 0 & 0 & 0 & 3 & 0 & 3 & 3 & 1 & 0 & 0 & 0 & 0 & 0 & 0 & 0 & 0 \\
 0 & 0 & 1 & 0 & 0 & 1 & 1 & 0 & 0 & 0 & 0 & 0 & 0 & 0 & 0 & 0 & 0 & 3 & 0 & 3 & 3 & 1 & 0 & 0 & 0 & 0 & 0 & 0 & 0 \\
 1 & 0 & 0 & 1 & 1 & 0 & 0 & 0 & 0 & 0 & 0 & 0 & 0 & 0 & 0 & 3 & 1 & 0 & 0 & 0 & 0 & 0 & 3 & 7 & 7 & 2 & 1 & 0 & 0 \\
 0 & 1 & 0 & 0 & 1 & 1 & 0 & 0 & 0 & 0 & 0 & 0 & 0 & 0 & 0 & 0 & 3 & 1 & 0 & 0 & 0 & 0 & 0 & 3 & 7 & 7 & 2 & 1 & 0 \\
 0 & 0 & 1 & 0 & 0 & 1 & 1 & 0 & 0 & 0 & 0 & 0 & 0 & 0 & 0 & 0 & 0 & 3 & 1 & 0 & 0 & 0 & 0 & 0 & 3 & 7 & 7 & 2 & 1 \\
 1 & 0 & 1 & 0 & 1 & 1 & 0 & 1 & 0 & 0 & 0 & 0 & 0 & 0 & 0 & 2 & 2 & 2 & 2 & 2 & 2 & 2 & 0 & 0 & 0 & 0 & 0 & 0 & 0 \\
 1 & 1 & 0 & 0 & 0 & 1 & 1 & 1 & 0 & 0 & 0 & 0 & 0 & 0 & 0 & 1 & 2 & 3 & 1 & 1 & 0 & 0 & 6 & 0 & 6 & 6 & 2 & 0 & 0 \\
 0 & 1 & 1 & 0 & 0 & 0 & 1 & 1 & 1 & 0 & 0 & 0 & 0 & 0 & 0 & 0 & 1 & 2 & 3 & 1 & 1 & 0 & 0 & 6 & 0 & 6 & 6 & 2 & 0 \\
 0 & 0 & 1 & 1 & 0 & 0 & 0 & 1 & 1 & 1 & 0 & 0 & 0 & 0 & 0 & 0 & 0 & 1 & 2 & 3 & 1 & 1 & 0 & 0 & 6 & 0 & 6 & 6 & 2 \\
 1 & 1 & 1 & 0 & 1 & 1 & 1 & 0 & 0 & 0 & 1 & 0 & 0 & 0 & 0 & 0 & 0 & 0 & 0 & 0 & 0 & 0 & 4 & 4 & 4 & 4 & 4 & 4 & 4
\end{array}
\right)
\]
Furthermore, the Gray image $\Phi(\C)$ of $\C$ is a $[57,33,4]$ linear binary code.

\end{ex}

\begin{section}{Conclusion}

In this work we introduce $\M$-additive and cyclic codes. We give the standard forms of the generator matrices of both $\C$ and its dual $\C^{\perp}$. For the cyclic case, we determine generator polynomials and a spanning sets of a $\M$-cyclic code $\C$. We also give examples for both cases. Since this family of codes are very new, there are many things to explore, for example self-dual codes. 

\end{section}




\begin{thebibliography}{1}


\bibitem{Abu}
T. Abualrub, I. Siap and N. Aydin,
\newblock $\mathbb{Z}_2\mathbb{Z}_4$-additive cyclic codes.
\newblock {\em IEEE Trans. Info. Theory}, vol. 60(3), pp. 1508-1514, (2014).

\bibitem{ismail1} I. Aydogdu and I. Siap,
\newblock The Structure of $\mathbb{Z}_{2}\mathbb{Z}_{2^s}$-Additive Codes: Bounds on the minimum distance,
 \newblock {\em Applied Mathematics and Information Sciences(AMIS)},vol. 7(6), pp. 2271-2278, (2013).

\bibitem{ismail2} I. Aydogdu and I. Siap,
\newblock On $\Z_{p^r}\Z_{p^s}$-additive codes,
\newblock {\em Linear and Multilinear Algebra}, vol. 63(10), pp. 2089-2102, (2015).

\bibitem{Borges} J. Borges, C. Fern\'{a}ndez-C\'{o}rdoba, J. Pujol, J. Rif\`{a} and M. Villanueva,
\newblock $\mathbb{Z}_{2}\mathbb{Z}_{4}$-linear codes: Generator Matrices and Duality,
\newblock {\em Designs, Codes and Cryptography}, vol. 54(2), pp. 167-179, (2010).


\bibitem{Borges2} J. Borges, C. Fern\'{a}ndez-C\'{o}rdoba, and R. Ten-Valls,
\newblock $\mathbb{Z}_{2}\mathbb{Z}_{4}$-additive  cyclic codes,  generator  polynomials  and  dual  codes,
\newblock {\em IEEE Trans. Info. Theory}, vol. 62(11), 6348-6354, (2016).
  

\bibitem{Calder}
A.R. Calderbank and N.J.A. Sloane,
\newblock Modular and $p$-adic Cyclic Codes,
\newblock {\em Designs, Codes and Cryptography}, vol. 6, pp. 21-35, (1995).

\bibitem{Carlet} C. Carlet,
\newblock $\mathbb{Z}_{2^k{k}}$-linear codes,
\newblock {\em IEEE Trans. Info. Theory}, vol. 44, 1543-1547, (1998).


\bibitem{Schmidt}
 M. Greferath and S. E. Schmidt,  
\newblock Gray isometries for finite chain rings,
\newblock {\em IEEE Trans. Info. Theory}, vol. 45(7), pp. 2522-2524, (1999).


\bibitem{Hammons} A.R. Hammons, P.V. Kumar, A.R. Calderbank, N.J.A. Sloane and P. Sole,
 \newblock The $\Z_{4}$-linearity of Kerdock, Preparata, Goethals and related codes,
 \newblock \emph{IEEE Trans. Inform. Theory}, vol. 40(2), 301-319 (1994).


\bibitem{Honold} T. Honold and I. Landjev,
\newblock Linear codes over finite chain rings,
\newblock \emph{In Optimal Codes and Related Topics}, 116-126, Sozopol, Bulgaria, 1998.



\bibitem{Steo} H. Rifa-Pous, J. Rifa and L. Ronquillo,
\newblock $\Z_{2}\Z_{4}$-Additive Perfect Codes in Steganography, 
\newblock \emph{Advances in Mathematics of Communications}, vol.5(3), 425-433, (2011).

\end{thebibliography}
\end{document}